\documentclass[12pt, draftcls, onecolumn, twoside]{IEEEtran}
\usepackage{amsmath,amssymb,amsfonts,amsthm}
\usepackage{algorithmic}
\usepackage{graphicx,pgfplots}
\usepackage{textcomp,multirow}
\usepackage{caption,subcaption}
\usepackage{bbm}

\newtheorem{theorem}{\textbf{Theorem}}

\newtheorem{definition}{\textit{Definition}}

\pgfplotsset{compat=1.17} 
\newcommand{\Zcal}{\mathcal{Z}}
\newcommand{\Kappa}{\mathcal{K}}
\newcommand{\Wcal}{\mathcal{W}}
\newcommand{\Qcal}{\mathcal{Q}}
\newcommand{\Ebb}{\mathbb{E}}
\newcommand{\Nbb}{\mathbb{N}}

\newcommand{\Vbf}{\mathbf{V}}
\newcommand{\Pbf}{\mathbf{P}}
\newcommand{\Qbf}{\mathbf{Q}}
\newcommand{\qbf}{\mathbf{q}}
\newcommand{\Xbf}{\mathbf{X}}
\newcommand{\dbf}{\mathbf{d}}
\newcommand{\abf}{\mathbf{a}}
\newcommand{\kpr}{k^{\prime}}
\newcommand{\fpr}{f^{\prime}}
\newcommand{\bpr}{b^{\prime}}

\newcommand{\Prob}[1]{\mathbb{P}\Big\{ #1 \Big\}}
\newcommand{\RPIR}[2]{\bigg( 1 + \frac{1}{#1} + \cdots + \frac{1}{#1^{#2 - 1}}\bigg)}
\newcommand{\RPIRs}[2]{\Bigg( \frac{1}{#1} + \frac{1}{{#1}^{2}} \cdots + \frac{1}{#1^{#2}}\Bigg)}
\newcommand{\RPIRg}[2]{\Bigg( 1 + \frac{1}{#1} + \frac{1}{{#1}^{2}} \cdots + \frac{1}{#1^{#2}}\Bigg)}
\newcommand{\RPIRgshort}[2]{\Bigg( 1 + \cdots + \frac{1}{#1^{#2}}\Bigg)}
\newcommand{\RPIRshort}[2]{\bigg( 1 + \cdots + \frac{1}{#1^{#2 - 1}}\bigg)}
\mathchardef\mhyphen="2D
\title{Cache-Aided Multi-User Private Information Retrieval using PDAs}
\author{Kanishak Vaidya, B Sundar Rajan \\
Department of Electrical Communication Engineering, IISc Bangalore, India \\
E-mail: \{kanishakv, bsrajan\}@iisc.ac.in \vspace{-3mm}
}

\begin{document}
\maketitle

\begin{abstract}
  We consider the problem of cache-aided multi-user private information retrieval (MuPIR). In this problem, $N$ independent files are replicated across $S \geq 2$ non-colluding servers. There are $K$ users, each equipped with cache memory which can store $M$ files. Each user wants to retrieve a file from the servers, but the users don't want any of the servers to get any information about their demand. The user caches are filled with some arbitrary function of the files before the users decide their demands, known as the placement phase. After deciding their demands, users cooperatively send queries to the servers to retrieve their desired files privately. Upon receiving the queries, servers broadcast coded transmissions which are a function of the queries they received and the files, known as the delivery phase. Conveying queries to the servers incurs an upload cost for the users, and downloading the answers broadcasted by the servers incurs a download cost. To implement cache-aided MuPIR schemes, each file has to be split into $F$ packets. In this paper, we propose MuPIR schemes that utilize placement delivery arrays (PDAs) to characterize placement and delivery. Proposed MuPIR schemes significantly reduce subpacketization levels while slightly increasing the download cost. The proposed scheme also substantially reduces the upload cost for the users. For PDAs based on {\it Ali-Niesen} scheme for centralized coded caching, we show that our scheme is order optimal in terms of download cost. We recover the optimal single-user PIR scheme presented by {\it Tian et al.} in [``Capacity-Achieving Private Information Retrieval Codes With Optimal Message Size and Upload Cost,'' IEEE Trans. Inform. Theory, 2019] as a special case. Our scheme also achieves optimal rate for single-user cache-aided PIR setup reported in [R. Tondon, ``The capacity of cache aided private information retrieval,'' Annual Allerton Conference on Communication, Control and Computing, 2017.] \footnote{Part of the content of this manuscript has been communicated to IEEE Information Theory Workshop (ITW) 2023, to be held in Saint-Malo, France from 23-28 April 2023 \cite{VaR2}.}
\end{abstract}

\section{Introduction}\label{sec:Introduction}
The problem of Private Information Retrieval (PIR), first described in~\cite{Chor95PIR}, seeks efficient ways for a user to retrieve data from distributed the servers privately. A user wishes to retrieve one file among a set of files stored across servers. But the servers should not know the identity of the desired file. A PIR scheme that minimizes the download cost for the user is described in~\cite{Sun17PIR}. After that, the PIR problem is solved for various other settings, for instance~\cite{Sun18cPIR,Lin19wPIR,Chen20siPIR}.

Currently, PIR is being studied with another content delivery scenario called coded caching. As described in~\cite{MAli13CodedCaching}, in coded caching, there are multiple users equipped with user cache and one server storing some files. During off-peak hours, users fill their caches and then, during peak network traffic hours, demand files from the server. The server will perform coded transmissions such that a single transmission can benefit multiple users simultaneously. After receiving the transmissions, users will be able to decode their demanded files with the help of content stored in their cache. Recently, in~\cite{Ming21CaMuPIR}, a cache-aided PIR strategy is described where multiple users, each having access to dedicated caches, want to recover files from non-colluding servers privately. An order optimal strategy is described that combines coding benefits of PIR in~\cite{Sun17PIR} and coded caching~\cite{MAli13CodedCaching}.

\subsubsection*{Notations}\label{sec:subsub:Notations}
For integers $m$ and $n$, $[m:n]$ is set of integers $N,$ $\{ m\leq N \leq n\}$. $\Nbb$ is the set of all positive integers, $\Nbb = \{1 , 2 , 3 , \cdots \}$. $[N]$ is same as $[1:N]$. For a set $\mathcal{S}$ of size $|\mathcal{S}|$ and integer $N \leq |\mathcal{S}|$, $\binom{\mathcal{S}}{N}$ denote set of all subsets of $\mathcal{S}$ of size $N$. For set $\{a_n | n \in [N]\}$ and $\mathcal{N} \subseteq [N]$, $a_{\mathcal{N}}$ denotes set $\{a_n | n \in \mathcal{N}\}$. Given a vector $\Vbf = (V_{0} , V_{1} , \ldots , V_{N})$ and integers $m$, $n$ such that $0 \leq m \leq n \leq N$, $\Vbf(m:n) \triangleq (V_{m} , \ldots V_{n})$. For integers $M$ and $N$, ${(M)}_{N}$ is $M \bmod N$. 

\subsection{Coded Caching}\label{sec:sub:CodedCaching_intro}
In a centralized coded caching system described in~\cite{MAli13CodedCaching}, a server stores $N$ independent files $\{W_{0} , \ldots , W_{N-1}\}$, each of size $L$ bits. There are $K$ users, each equipped with a cache memory of $ML$ bits. The system works in two phases. In \textit{delivery phase}, when the network is not congested, the server fills the caches with the contents of the files. Then, in \textit{delivery phase}, all users wish to retrieve some files from the servers. User $k$ wishes to retrieve file $d_{k} \in [0:N-1]$. Every user conveys the index of their desired file to the server. After receiving the demands from the users, the server broadcasts coded transmissions $\Xbf$ of size $R_{cc}L$ bits. The transmission $\Xbf$ is a function of the files stored at the server and users' demand. After receiving the coded transmission $\Xbf$, all the users should be able to retrieve their desired files with the help of their cache content. The quantity $R_{cc}$ is defined as the rate of the coded caching system, and it measures the size of the server's transmissions.

In~\cite{MAli13CodedCaching} a placement and delivery scheme was provided, known as the MAN scheme, for $M = tN/K$ for some $t \in [0:K]$ that achieves rate (as a function of $M$)
\begin{equation}\label{eq:MAN_rate}
	R_{cc}(M) = K \Big(1-\frac{M}{N}\Big)\frac{1}{1+\frac{KM}{N}}.
\end{equation}
For other memory points, lower convex envelope of points ${\Big(tN/K , R_{cc}(tN/K)\Big)}_{t \in [0:K]}$ can be achieved using memory sharing. In this scheme, each file has to be divided into $\binom{K}{t}$ subfiles, which is known as the \textit{subpacketization level}. This high subpacketization level is a major drawback of the MAN scheme.

Let $R^{*}_{cc}(M)$ be the minimum achievable rate for a coded caching problem with $K$ users, $N$ files and cache size of $M$. It is shown in~\cite{Ramamoorthy17LowerBoundCC}, that rate achieved by the MAN scheme is optimal within a multiplicative factor of $4$, i.e.
\begin{equation}\label{eq:CCoptimality}
  R_{cc}^{*}(M) \geq \frac{R_{cc}(M)}{4}.
\end{equation}
It is also shown in~\cite{YMA18RMTCachingUncoded} that for worst case demands, the rate $R_{cc}(M)$ is optimal for the case $N \geq K$, considering uncoded placement. Considering $R^{*}_{ucc}(M)$ to be the optimal rate for a coded caching problem with uncoded placement, we have
\begin{equation}\label{eq:unCCoptimality}
  R_{cc}(M) = R^{*}_{ucc}(M).
\end{equation}

\subsection{Placement Delivery Arrays}\label{sec:sub:PDA_intro}
To characterize coded caching schemes, the concept of \textit{placement delivery array} (PDA) was introduced in~\cite{QYan17PDA}. PDAs can be used to describe the placement and the delivery phase using a single array. A PDA is defined as follows:
\begin{definition}
For  positive integers $K,F,Z$ and $S$, an $F\times K$ array  $\mathbf{P}=[p_{f,k}]$, $f\in [F], k\in[K]$, composed of a specific symbol $``*"$  and $S$ non-negative integers $1,\cdots, S$, is called a $(K,F,Z,S)$ placement delivery array (PDA) if it satisfies the following conditions:
\begin{enumerate}
  \item [C$1$.] The symbol $``*"$ appears $Z$ times in each column;
  \item [C$2$.] Each integer occurs at least once in the array;
  \item [C$3$.] For any two distinct entries $p_{f_1,k_1}$ and $p_{f_2,k_2}$,    $p_{f_1,k_1}=p_{f_2,k_2}=s$ is an integer only if
  \begin{enumerate}
     \item [a.] $f_1\ne f_2$, $k_1\ne k_2$, i.e., they lie in distinct rows and distinct columns; and
     \item [b.] $p_{f_1,k_2}=p_{f_2,k_1}=*$, i.e., the corresponding $2\times 2$ sub-array formed by rows $f_1,f_2$ and columns $k_1,k_2$ must be of the following form
  \begin{eqnarray*}
    \left[\begin{array}{cc}
      s & *\\
      * & s
    \end{array}\right]~\textrm{or}~
    \left[\begin{array}{cc}
      * & s\\
      s & *
    \end{array}\right]
  \end{eqnarray*}
   \end{enumerate}
\end{enumerate}
\end{definition}\hfill \qedsymbol%

A coded caching scheme characterized by a $(K,F,Z,S)$ PDA has cache size $M/N = Z/F$ and achieves rate $R_{cc} = S/F$ and the subpacketization level is $F$. For a given $(K,F,Z,S)$ PDA $\Pbf$, we define $\Kappa_{s} \triangleq \{k \in [K] : p_{f,k} = s, \mbox{ for some } f \in [F]\}$, i.e.\ $\Kappa_{s}$ is the set of indices of columns of $\Pbf$ that has the integer $s$. A $(K,F,Z,S)$ PDA is called a $g$-regular PDA if $|\Kappa_{s}| = g, \forall s \in [S]$ for some $g \in \Nbb$. Such PDAs are denoted by $g\mhyphen(K,F,Z,S)$.

In~\cite{QYan17PDA}, two PDA constructions were presented. The placement and delivery schemes corresponding to these PDAs have a similar rate as the rate in the MAN scheme, but there is a significant improvement in the subpacketization level. In~\cite[Theorem 4]{QYan17PDA}, for any given $q, m \in \Nbb^{+}, q \geq 2$ a $(m+1)\mhyphen\big( q(m+1),q^{m},q^{m-1},q^{m+1}-q^{m} \big)$ PDA construction is provided. For this PDA $M/N = 1/q$, and rate is $R = q-1$. In~\cite[Theorem 5]{QYan17PDA}, for any given $q, m \in \Nbb^{+}, q \geq 2$ a $(q-1)(m+1)\mhyphen\big(q(m+1),(q-1)q^{m}, (q-1)^{2}q^{m-1},q^{m}\big)$ PDA construction is provided with $M/N = 1 - (1/q)$, and rate is $R = 1 / (q-1)$.

\subsection{Private Information Retrieval}\label{sec:sub:PIR_intro}
In Private Information Retrieval (PIR) there is one user and a set of $N$ independent files $\mathcal{W} = {\{W_n\}}_{n = 0}^{N-1}$ replicated across $B$ non-colluding servers indexed by integers in the set $[0:B-1]$. A user wants to retrieve one out of $N$ files, say file $W_{D} , {D} \in [0:N-1]$, but doesn't want the servers to know the identity of the file. In other words, the user wants to hide the index ${D}$ from the servers. In order to retrieve this desired file privately, the user generates $B$ queries ${\{Q_b^{D}\}}_{b = 1}^B$ and sends the  query $Q_b^{D}$ to server $b$. After receiving their respective queries, servers will construct answers which are a function of the query they got and the files they have. Server $b$ will construct answer $A_b^{D}(\mathcal{W})$ and send it to the user. After receiving answers from all the $B$ servers, the user should be able to decode the desired file. Privacy and correctness conditions are formally stated as follows:

For privacy we need that 
\[
  I({D} ; Q_b^{D}) = 0 , \forall b \in \{1 , \hdots , B\},
\]
and for correctness
\[
  H(W_{D} | {D} , A_{1}^{D}(\mathcal{W}) \hdots A_{B}^{D}(\mathcal{W}) , Q_{1}^{D} \hdots Q_{B}^{D}) = 0.
\]
The rate of PIR is a parameter that describes the download cost for the user (or the transmission cost for the servers), which is defined as
\[
  R_{PIR} = \frac{\sum_{b = 1}^{B}(H(A_b^{D}(\mathcal{W})))}{H(W_{D})}.
\]
A rate optimal scheme is provided in~\cite{Sun17PIR} with optimal rate $R^*_{PIR}$ given by
\begin{equation}\label{eq:capacity_PIR}
  R^*_{PIR} = \RPIR{B}{N}.
\end{equation}
The retrieval scheme provided in~\cite{Sun17PIR} requires dividing each file into $B^{N}$ subfiles and incurs an upload cost of $BN \log_{2} \Big( \frac{B^{N}!}{B^{N-1}!} \Big)$ bits. Another capacity achieving PIR scheme is provided in~\cite{Tian19optimalUpSubPIR}, which requires dividing each file into $B-1$ subfiles and incurs upload cost of only $B(N-1) \log_{2} B$ bits. This upload cost and subpacketization level are shown to be optimal among all capacity-achieving linear PIR codes in~\cite{Tian19optimalUpSubPIR}.

\subsection{Cache-Aided Private Information Retrieval}\label{sec:sub:CAPIR_intro}
For a single user cache-aided PIR setup, with $N$ files of size $L$ bits each, and $B$ non-colluding servers, where a single user can store $ML$ bits, the optimal rate was shown in~\cite{Tandon19CaSuPIR} to be
\begin{equation}\label{eq:capacity_CaPIR}
	R_{CaPIR} = \Big(1 - \frac{M}{N}\Big)\RPIR{B}{N}.
\end{equation}
The above rate can be achieved by storing $M/N$ fraction of every file in the cache of the user and then querying the remaining $\Big( 1 - \frac{M}{N} \Big)$ fraction using the PIR scheme explained in~\cite{Sun17PIR}.

For multiple cache-aided users, various PIR strategies were given in~\cite{Ming21CaMuPIR}. For the cache-aided MuPIR problem with $N=2$ files, $K = 2$ users and $B \geq 2$ servers, a novel {\it cache-aided interference alignment} (CIA) based achievable scheme was given that achieves the rate~\cite[Theorem 1]{Ming21CaMuPIR}:
\begin{align*}
  R_{\textrm {CIA}}(M) = \begin{cases}
                           2(1-M),& 0\le M\le \frac {B-1}{2B} \\
                           \frac {(B+1)\left ({3-2M }\right)}{2B+1}, & \frac {B-1}{2B}\le M \le \frac {2(B-1)}{2B-1} \\
                           \left ({1-\frac {M}{2}}\right)\left ({1+\frac {1}{B}}\right), & \frac {2(B-1)}{2B-1} \le M\le 2.
                         \end{cases}
\end{align*}
This rate is optimal for $B = 2,3$ servers. Moreover the rate $R_{\textrm{CIA}}(M)$ is optimal when $M\in \left[{ 0, \frac {B-1^{\vphantom {)}} }{2B_{\vphantom {)}} } }\right]\cup \left[{ \frac {2(B-1)}{2B-1},2}\right]$.

In~\cite[Theorem 2]{Ming21CaMuPIR}, an achievable scheme called the \textit{product design} was proposed for general $N$, $K$ and $B$ that achieves rate
\begin{align*}
  R_{\mathrm{ PD}}(M) = \min \big \{N-M,\widehat {R}(M)\big \}, \mbox{where }  
  \widehat {R}(M) = \frac {K-t}{t+1} \RPIR{B}{N}
\end{align*}
and $t = \frac{KM}{N} \in [0:K]$. This rate was shown to be order optimal within a multiplicative factor of $8$. In this product design, the subpacketization level is $B^{N} \binom{K}{t}$ and the upload cost is $(t+1)\binom{K}{t+1}BN\log_{2}(\frac{B^{N}!}{B^{N-1}!})$ bits.

\subsection{Our Contributions}\label{sec:sub:our_contrib_intro}
In this paper, we use PDAs to construct PIR schemes for cache-aided MuPIR setups. For any given $(K,F,Z,S)$ PDA, we propose a PIR scheme for a cache-aided MuPIR setup with $K$ users each equipped with a cache of size $ZNL/F$ bits, $B \geq 2$ non-colluding servers and $N$ files. The proposed scheme incurs low upload cost, and the subpacketization level is also small. The main contributions of this paper are listed as follows:
\begin{itemize}
	\item Theorem~\ref{th:achievability} states the rate, subpacketization level and upload cost incurred by our proposed achievable scheme for any given PDA.
  \item The single user PIR scheme with optimal rate, subpacketization level and upload cost provided in~\cite{Tian19optimalUpSubPIR} is recovered as a special case of Theorem~\ref{th:achievability}.
  \item The rate provided in Theorem~\ref{th:achievability} is equal to the optimal rate for the special case of cache-aided single-user PIR setup described in~\cite{Tandon19CaSuPIR}.
  \item When the proposed scheme is specialized to the PDA corresponding to the MAN scheme, the resulting MuPIR scheme is shown to be order optimal in Theorem~\ref{th:order_optimality}.
  \item The proposed scheme corresponding to the PDAs given in~\cite{QYan17PDA} is compared to the MuPIR scheme given in~\cite{Ming21CaMuPIR}. The rate in our scheme is higher, but there is a significant reduction in the subpacketization level and the upload cost.
\end{itemize}

\section{System Model}\label{sec:system_model}
\begin{figure}
  \centering
  \includegraphics[width = 0.65\textwidth]{./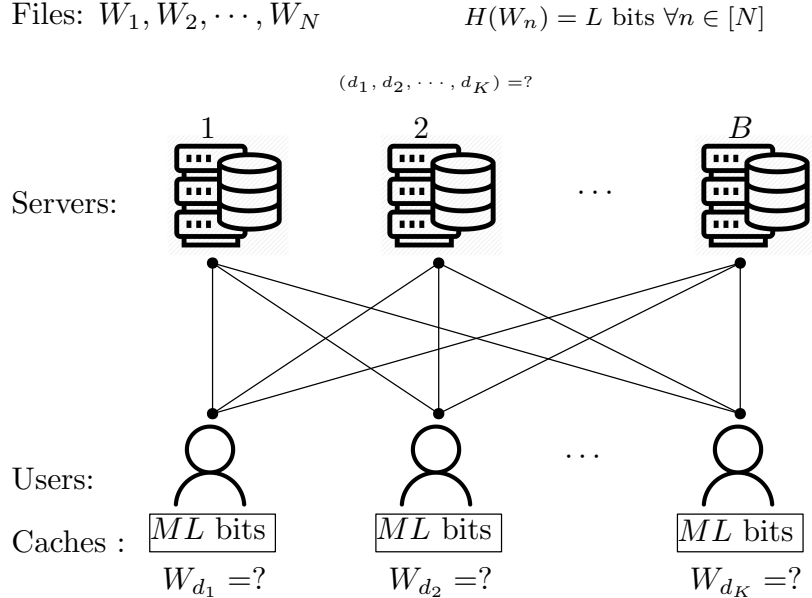}
  \caption{Dedicated cache-aided multi user PIR system.}
  \label{fig:system_model}
\end{figure}

There are $B$ non-colluding servers and a set of $N$ files $\Wcal = \{W_1 , W_2 , \cdots , W_N\}$ that are replicated across all the servers. Every file is of $L$ bits, i.e.\
\[
  H(W_k) = L \mbox{ bits } \forall k \in [K].
\]
We consider cache-aided systems derived from PDAs. Consider a $(K , F , Z , S)$ PDA as described in Section~\ref{sec:sub:PDA_intro}. It corresponds to a cache-aided system with $K$ users, each equipped with a dedicated cache of size $ML$ bits where $\frac{M}{N} = \frac{Z}{F}$. The caching system works in two phases.

\subsubsection*{Placement Phase} In this phase, servers fill the caches with the content of the files. Let $\Zcal_k$ be the content stored in the cache of user $k$. Then
\[
  H(\Zcal_k) \leq ML = \frac{Z}{F} NL \mbox{ bits } \forall k \in [K].
\]
Note that in the placement phase, servers don't know the future demands of the users, and the placement is done without the knowledge of what the users are going to demand in the next phase.
\subsubsection*{Private Delivery Phase} In the delivery phase, each user wishes to retrieve a file from the servers. Let user $k$ want the file $W_{d_k} , \forall k \in [K]$. We define $\dbf = (d_1 , d_2 , \cdots , d_K)$ to be the demand vector. Also, users don't want any of the servers to get any information about the demand vector. To achieve this, users cooperatively construct $B$ queries, $\{Q_b^{\dbf}\} , \forall b \in [B]$, one for each server. Query $Q_b^{\dbf}$ is sent to server $b$, and after getting this query, server $b$ responds with answer $A_b^{\dbf}(\Wcal , Q_b^{\dbf})$ of size $R_{b}L$ bits, which is a function of the queries sent to the individual server and the files stored across these servers. $A_b^{\dbf}(\Wcal , Q_b^{\dbf})$ is broadcasted to all the users via an error free link. After receiving the answers transmitted by the servers, each user should be able to decode its desired file using these answers and the cache contents respective user have access to. Formally we can say that
\begin{align}
  \mbox{For privacy:} ~~~ & I(\dbf ; Q_b^{\dbf} , \Wcal) = 0 , \forall b \in [B]. \\
  \mbox{For Correctness:} ~~~ &  H(W_{d_k} | Q_{[B]}^{\dbf} , A_{[B]}^{\dbf} , \Zcal_k , \dbf) = 0 , \forall k \in [K].
\end{align}

The following parameters characterize the performance of a cache-aided multi-user PIR:
\subsubsection*{Rate} This measures the average number of bits broadcasted by the servers to the users. We define the rate to be
\begin{equation}\label{eq:rate_definition}
  R = \Ebb\Bigg\{\sum_{b = 0}^{B-1} R_{b}\Bigg\}
\end{equation}
where $R_{b}L$ is the size of the answer $A_{b}^{\dbf}(\Wcal , Q_{b}^{\dbf})$ broadcasted by server $b$. This quantity should be minimized as the average size of broadcast performed by the servers is $RL$ bits.
\subsubsection*{Subpacketization level} This is the number of subfiles a file has to be divided into during the placement and delivery phase. The subpacketization level also has to be kept small.
\subsubsection*{Upload cost} This is the total number of bits users must send to the servers to convey the queries. The upload cost is
\begin{equation}
  U = \sum_{b \in [B]} H(Q_b^{\dbf}).
\end{equation}
We also aim to keep the upload cost as low as possible.

\section{Main Results}\label{sec:main_results}
In this section, we present the main results of this paper and compare the results with previous results on cache-aided MuPIR.
\begin{theorem}\label{th:achievability}
  Given a $(K , F , Z , S)$ PDA, there exist a cache-aided multi-user PIR scheme with $B$ non-colluding servers, $N$ files, $K$ users each equipped with a dedicated cache of size $M / N = Z / F$ with the following parameters
  \begin{itemize}
    \item Rate: $\min\{N-M , R\}$ where
          \[
            R = \frac{S}{F} \Bigg( 1 + \frac{1}{S}\sum_{s \in [S]} \RPIRs{B}{|\Kappa_{s}| (N-1)} \Bigg).
          \]
          \item Subpacketization level: $(B - 1)F$
          \item Upload Cost: $BK(N-1)\log_{2}B$ bits.
    \end{itemize}
\end{theorem}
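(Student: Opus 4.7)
My plan is to construct the scheme directly from the given $(K,F,Z,S)$ PDA and then verify each performance parameter together with privacy and correctness.

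Placement. I would split each file $W_n$ into $(B-1)F$ subfiles indexed by pairs $(f,j)$ with $f\in[F]$ and $j\in[B-1]$, and place in the cache of user $k$ every subfile $W_{n,(f,j)}$ for which $p_{f,k}=*$. PDA condition C1 guarantees that exactly $Z$ of the $F$ row-blocks are cached per user, so the memory usage is $\frac{Z}{F}NL=ML$ bits and the subpacketization level is $(B-1)F$ as claimed.

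Delivery. For each integer $s\in[S]$ let $\Kappa_s$ be the associated column set and, for $k\in\Kappa_s$, let $f_{s,k}$ be the unique row with $p_{f_{s,k},k}=s$. PDA condition C3 implies that for every distinct $k,\kpr\in\Kappa_s$ the entry $p_{f_{s,k},\kpr}=*$, so every other user in $\Kappa_s$ already caches the $B-1$ subfiles comprising the block $W_{d_k,f_{s,k}}$ that user $k$ is missing. The transmission indexed by $s$ therefore reduces to a sub-problem in which $|\Kappa_s|$ users retrieve one $(B-1)$-symbol block each, from $N$ files, on $B$ servers, with the side information that every user in $\Kappa_s$ holds the desired blocks of all the others. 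On each such sub-problem I would run a multi-user extension of the Tian et al. PIR scheme whose total broadcast matches $\frac{L}{F}\bigl(1+\tfrac{1}{B}+\cdots+\tfrac{1}{B^{|\Kappa_s|(N-1)}}\bigr)$ bits. Summing over $s\in[S]$ gives exactly the claimed $R$, and the $\min$ with $N-M$ incorporates the trivial scheme that simply broadcasts the uncached fractions of the demanded files. For the upload cost, each user communicates its demand to every server via a Tian-style permutation descriptor over $N$ indices, which costs $(N-1)\log_2 B$ bits per user-server pair and hence $BK(N-1)\log_2 B$ bits in total.

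Main obstacle. The crux is the design of the multi-user sub-scheme, which must simultaneously (i) let every user in $\Kappa_s$ decode its $(B-1)$-symbol block from the broadcast and its cache, (ii) keep the marginal query distribution at each server invariant under $\dbf$ so that $I(\dbf;Q_b^{\dbf},\Wcal)=0$, and (iii) achieve the geometric download overhead $1+\tfrac{1}{B}+\cdots+\tfrac{1}{B^{|\Kappa_s|(N-1)}}$. Extending the recursive construction of Tian et al. from a single requester to $|\Kappa_s|$ requesters while preserving all three properties is the step that drives the theorem; once the sub-scheme is in hand, the subpacketization and upload counts follow from the construction, the rate expression is obtained by summing over $s$, and global privacy follows because distinct sub-schemes act on disjoint subpackets and hence can be randomised independently.
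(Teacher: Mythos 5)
Your proposal sets up the right architecture (PDA-driven placement, one transmission per integer $s$, a PIR-type sub-problem among the users in $\Kappa_{s}$), but it has a genuine gap: the object that actually proves the theorem --- a concrete multi-user query/answer scheme achieving per-$s$ expected download $\frac{L}{F}\big(1+\frac{1}{B}+\cdots+\frac{1}{B^{|\Kappa_{s}|(N-1)}}\big)$ while keeping the queries independent of $\dbf$ --- is exactly what you defer as the ``main obstacle.'' Asserting that a ``multi-user extension of Tian et al.'' with these properties exists is not a proof; constructing it is the content of the paper. The paper's construction is explicit: each user $k$ draws a single random vector $\Vbf^{k}\in[0:B-1]^{N-1}$, sends to server $b$ the length-$N$ vector $\Qbf_{b}^{k}$ obtained by inserting at position $d_{k}$ the symbol that makes the entries sum to $b \bmod B$; server $b$ answers, for each $s$, with the single coded sum $\sum_{(f,k):\Pbf_{f,k}=s}\sum_{n}W_{n,Q_{b,n}^{k}}^{f}$ under the convention $W_{n,0}^{f}=0$. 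Correctness uses C3: user $\kpr$ with $\Pbf_{\fpr,\kpr}=s$ holds in its cache the \emph{entire} row-$f$ subfiles of all $N$ files for every other $(f,k)$ with $\Pbf_{f,k}=s$, cancels that interference, and recovers the $B-1$ packets by differencing its own terms across servers; the rate bonus comes only from server $0$, whose $s$-th component is identically zero exactly when all users in $\Kappa_{s}$ sent all-zero vectors, an event of probability $B^{-|\Kappa_{s}|(N-1)}$. None of this mechanism (the sum-to-$b$ query structure, the zero-packet convention, the differencing decoder, the expected-rate accounting) appears in your argument.

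Two further points in your sketch would need repair even granting a sub-scheme. First, your privacy argument claims the per-$s$ sub-schemes ``can be randomised independently,'' yet your upload-cost claim of $(N-1)\log_{2}B$ bits per user-server pair requires a \emph{single} query per user reused across all $s$ containing that user (a MAN-PDA user lies in many $\Kappa_{s}$); with shared randomness the sub-schemes are correlated and independence cannot be invoked, which is why the paper proves privacy directly by showing $\Qbf_{b}$ is uniform over $\Qcal_{b}^{K}$ regardless of $\dbf$. Second, your reduction endows the users in $\Kappa_{s}$ only with each other's \emph{desired} blocks as side information, which is weaker than what the placement provides and than what decoding needs: the answers are coded across all $N$ files, so cancelling interference requires the full cached subfile rows guaranteed by C3, not merely the other users' demanded blocks. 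As stated, the sub-problem you pose is not the one the scheme solves, and it is not evident it admits the claimed download at all.
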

\begin{proof}
  Rate $N-M$ can be achieved by storing same $M/N$ fraction of each file in the cache of every user. Then in the delivery phase, the remaining $(1 - M/N)$ fraction of each file can be transmitted to the users. An achievable scheme is provided in Section~\ref{sec:sub:general-scheme} that privately retrieves $K$ files with rest of the above-mentioned parameters.
\end{proof}

\textbf{Remark:} For a $g$ regular PDA, for which $|\Kappa_{s}| = g , \forall s \in [S]$, the rate is given by
\begin{align}
  R = \frac{S}{F} \RPIRg{B}{g(N-1)}.
\end{align}

\textbf{Remark:} For the special case of only one user, i.e.\ $K=1$, the PDA has only one column containing $Z$ ``*''s and $S$ distinct integers. Therefore,
\[
  S = F - Z, \implies \frac{S}{F} = 1 - \frac{Z}{F} = 1 - \frac{M}{N}.
\]
For such a PDA, $|\Kappa_{s}| = 1, \forall s \in [S]$. In this case, the rate achieved is given by
\begin{align*}
  R &= \frac{S}{F} \Bigg( 1 + \frac{1}{S}\sum_{s \in [S]} \RPIRs{B}{(N-1)} \Bigg) \\
  &= \frac{S}{F}\RPIR{B}{N} \\
  &= \Big(1 - \frac{M}{N}\Big)\RPIR{B}{N}
\end{align*}
which is equal to the optimal rate of cache-aided single-user PIR setup described in~\cite{Tandon19CaSuPIR}.

\textbf{Remark:} If $K=1$ and $M=0$ (i.e.\ single-user no cache setup), a trivial $(1,1,0,1)$ PDA can be used. With this PDA, our scheme achieves optimal subpacketization level and optimal upload cost as described in~\cite{Tian19optimalUpSubPIR} alongwith optimal PIR rate described in~\cite{Sun17PIR,Tian19optimalUpSubPIR}.

\subsection{Example: MuPIR based on a PDA}\label{sec:sub:irregularPDA}
Here, we will demonstrate the results of Theorem~\ref{th:achievability} for the following PDA which is not a regular PDA.
\begin{equation}
	\Pbf = \begin{bmatrix}
    * & * & * &  * &  * &  1 &  2 &  4 \\
    * & * & * &  1 &  2 &  * &  * &  5 \\
    * & * & * &  4 &  5 &  7 &  8 &  * \\
    1 & 2 & 3 &  * &  * &  * &  * & 10 \\
    4 & 5 & 6 &  * &  * & 10 & 11 &  * \\
    7 & 8 & 9 & 10 & 11 &  * &  * &  * \\
  \end{bmatrix}
\end{equation}
This is a $(8 , 6 , 3 , 11)$ PDA with
\begin{align*}
	\Kappa_{1} &= \{1 , 4 , 6\} & \Kappa_{5} &= \{2 , 5 , 8\}  & \Kappa_{ 9} &= \{3\}          \\
	\Kappa_{2} &= \{2 , 5 , 7\} & \Kappa_{6} &= \{3\}          & \Kappa_{10} &= \{4 , 6 , 8\}  \\
	\Kappa_{3} &= \{3\}         & \Kappa_{7} &= \{1 , 6\}      & \Kappa_{11} &= \{5 , 7\}      \\
	\Kappa_{4} &= \{1 , 4 , 8\} & \Kappa_{8} &= \{2 , 7\}      &             &                 \\
\end{align*}
and therefore,
\begin{align*}
	|\Kappa_{1}| &= 3 & |\Kappa_{5}| &= 3 & |\Kappa_{ 9}| &= 1  \\
	|\Kappa_{2}| &= 3 & |\Kappa_{6}| &= 1 & |\Kappa_{10}| &= 3  \\
	|\Kappa_{3}| &= 1 & |\Kappa_{7}| &= 2 & |\Kappa_{11}| &= 2. \\
	|\Kappa_{4}| &= 3 & |\Kappa_{8}| &= 2 &               &     \\
\end{align*}
For this PDA, we consider a MuPIR setup with $K = 8$ users, each equipped with a cache of size $M/N = 0.5$, $B = 2$ servers and $N = 8$ files. Then according to Theorem~\ref{th:achievability}, the rate achieved by the MuPIR scheme based on this PDA is given by
\begin{align*}
	\frac{11}{6}\left( 1 + \frac{1}{11} \left( 5 \left( \frac{1}{2} + \cdots + \frac{1}{2^{7 \times 3}} \right) + 3\left( \frac{1}{2} + \cdots + \frac{1}{2^{7 \times 2}}\right) + 3\left( \frac{1}{2} + \cdots + \frac{1}{2^{7}}\right) \right) \right) \approx 3.663.
\end{align*}
The subpacketization level is $(B-1)F = 6$ subfiles (which is the same as the subpacketization level of the coded caching setting without PIR constraints), and the upload cost is $112$ bits.

\subsection{Scheme based on MAN PDA}\label{sec:sub:MAN-PDA}
Consider the PDA based on the Maddah-Ali and Niesen scheme~\cite{MAli13CodedCaching}. The PDA corresponding to the MAN scheme is a $(t+1)\mhyphen\Big(K , \binom{K}{t} , \binom{K-1}{t-1} , \binom{K}{t+1}\Big)$ PDA for $t = KM/N \in \Nbb$. In TABLE~\ref{tab:MN-PDAvsPD} we compare the parameters of the MuPIR scheme using the MAN-PDA with the product design. We can see that the rate achieved in our scheme is higher than the rate achieved by the product design, but this results in a significant reduction in the upload cost in our proposed scheme. Consider dedicated cache setup with $B = 2$ servers, $K = 4$ users and $N = 4$ files. In Figure~\ref{fig:man-pda-vs-pd-rate} we plot the rate achieved by the product design in this setting and the rate achieved by the MAN scheme based PDAs. We can see that the rate achieved by our scheme is higher than the rate achieved by the product design. Although the rate is higher, the subpacketization level for PDA based scheme would be $\binom{4}{M}$ for $M \in [4]$ whereas the subpacketization level is $16\binom{4}{M}$ for the product design. The subpacketization level is $16$ times higher in the product design as compared to PDA based scheme. Also, in terms of upload cost, PDA based scheme is performing better than the product design. In Figure~\ref{fig:man-pda-vs-pd-upload}, we plot the ratio of the upload cost of product design and the PDA-based scheme. For $M/N = 1$ upload cost for all coded caching schemes is zero. For other values, i.e. for $M/N \in [0 , \frac{1}{4} , \frac{1}{2} , \frac{3}{4}]$ we can see that the upload cost in product design is higher than the upload cost for the PDA based schemes.

\begin{figure}
  \centering
	\includegraphics[width = 0.9\textwidth]{./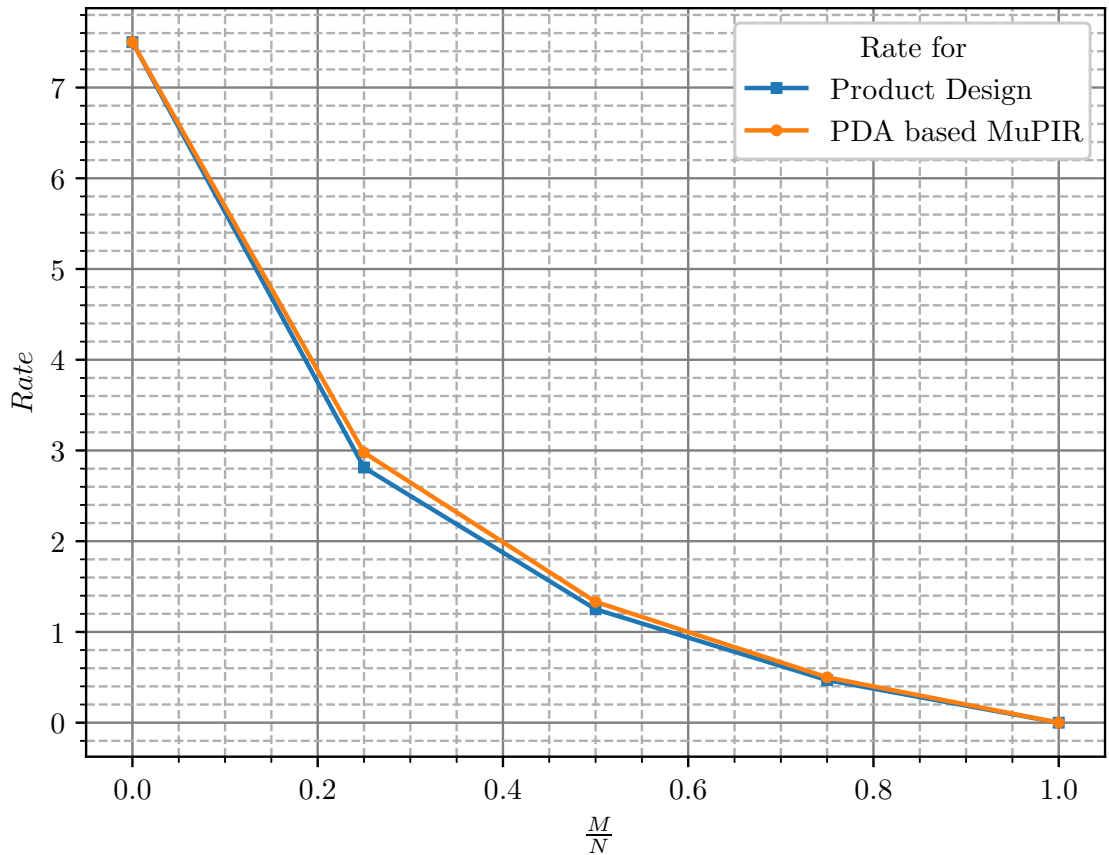}
  \caption{Comparing rates for Product design and MAN-PDA based MuPIR strategy with $B=2$ servers, $N=4$ files and $K = 4$ users for various values of $M/N$.}
  \label{fig:man-pda-vs-pd-rate}
\end{figure}

\begin{figure}
  \centering
	\includegraphics[width = 0.75\textwidth]{./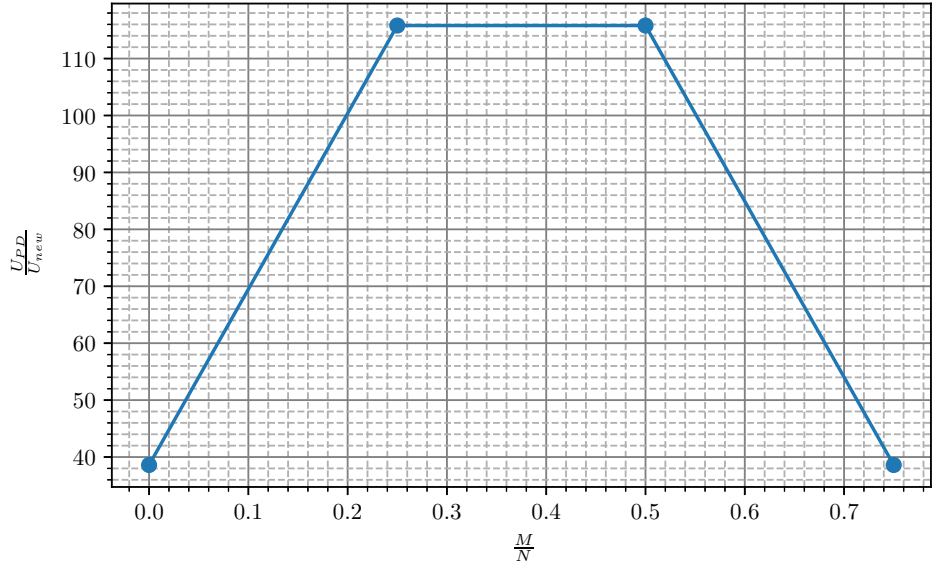}
  \caption{Ratio of upload costs for the Product design and MAN-PDA based MuPIR strategy with $B=2$ servers, $N=4$ files and $K = 4$ users for various values of $M/N$.}
  \label{fig:man-pda-vs-pd-upload}
\end{figure}

Although the subpacketization level in our scheme is still very high (increasing as $\binom{K}{\frac{KM}{N}}$), we can see that the subpacketization is only linear in $B$ in our case, whereas in the product design the subpacketization is proportional to $B^{N}$. Also, for a fixed caching ratio $M/N$, the subpacketization level in our scheme is independent of the number of files $N$, whereas, in product design, the upload cost increases exponentially with $N$. The $\binom{K}{KM/N}$ factor in our upload cost is due to the fact that we used MAN-PDA to construct our scheme, which has subpacketization level $\binom{K}{KM/N}$. Using PDAs with better subpacketization levels will improve the subpacketization in our case also.

\begin{table}[h!]
  \centering
  \begin{tabular}{| c | c | c |}
    \hline
    Parameter & MAN-PDA scheme & Product Design \\
    \hline
    \hline
    Rate & $\frac{K-\frac{KM}{N}}{\frac{KM}{N}+1}\RPIRg{B}{(\frac{KM}{N}+1)(N-1)}$ & $\frac{K-\frac{KM}{N}}{\frac{KM}{N}+1}\RPIR{B}{N}$ \\
    \hline
    Subpacketization & $(B-1)\binom{K}{\frac{KM}{N}}$ & $B^{N} \binom{K}{\frac{KM}{N}}$ \\
    \hline
    Upload cost & $BK(N-1)\log_{2}B$ bits & $BN \binom{K}{\frac{KM}{N} + 1} (\frac{KM}{N} + 1) \log_{2}\Big( \frac{B^{N}!}{B^{N-1}!} \Big)$ \\
    \hline
  \end{tabular}
  \caption{Comparing rate, subpacketization level and upload cost of MAN-PDA based scheme to the product design.}\label{tab:MN-PDAvsPD}
\end{table}

Now we show the the rate achieved in our scheme with the MAN-PDA is order optimal in terms of rate. Firstly, we define the optimal rate for a cache-aided MuPIR system with cache size $M$ to be $R^{*}(M)$. Recall that $R_{cc}(M)$, defined in~\eqref{eq:MAN_rate}, is the rate achieved by the MAN scheme for a coded caching setup, $R^{*}_{cc}(M)$ is the optimal rate for a coded caching problem, and $R_{ucc}^{*}(M)$ is the optimal rate for a coded caching problem under uncoded placement constraint defined in~\eqref{eq:CCoptimality} and~\eqref{eq:unCCoptimality} respectively.
\begin{theorem}\label{th:order_optimality}
  Given a PDA corresponding to the MAN scheme, the rate achieved in Theorem~\ref{th:achievability} is order optimal within a multiplicative factor of $8$ and order optimal within a multiplicative factor of $2$ considering $N \geq K$ and uncoded placement, i.e.,
  \begin{align*}
    R(M) &\leq 2 R^{*}_{ucc}(M) \\
    R(M) &\leq 8 R^{*}_{cc}(M).
  \end{align*}
\end{theorem}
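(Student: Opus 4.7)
The plan is to express the achievable rate $R(M)$ for the MAN-PDA in terms of the coded caching rate $R_{cc}(M)$, and then to invoke the existing bounds~\eqref{eq:CCoptimality} and~\eqref{eq:unCCoptimality}. First, I would apply the first remark after Theorem~\ref{th:achievability} to the MAN-PDA, which is $(t+1)$-regular with parameters $\bigl(K,\binom{K}{t},\binom{K-1}{t-1},\binom{K}{t+1}\bigr)$ for $t=KM/N\in[0:K]$. This gives
\begin{equation*}
  R(M)=\frac{\binom{K}{t+1}}{\binom{K}{t}}\,\RPIRg{B}{(t+1)(N-1)}=\frac{K-t}{t+1}\,\RPIRg{B}{(t+1)(N-1)}=R_{cc}(M)\,\RPIRg{B}{(t+1)(N-1)},
\end{equation*}
where the last equality uses the MAN rate formula~\eqref{eq:MAN_rate}.

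The second step is to bound the parenthesised factor. It is a truncated geometric series in $1/B$, so for every $B\geq 2$,
\begin{equation*}
  \RPIRg{B}{(t+1)(N-1)}\;\leq\;\sum_{i=0}^{\infty}\frac{1}{B^{i}}\;=\;\frac{B}{B-1}\;\leq\;2.
\end{equation*}
Combining the two displays yields $R(M)\leq 2\,R_{cc}(M)$ at every integer memory point $M=tN/K$. For non-integer $M$, both $R(M)$ and $R_{cc}(M)$ are defined by memory sharing (the lower convex envelope of the integer points), so the inequality lifts to all admissible $M$ by linearity.

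Finally, I would substitute this into the two known optimality relations. Under $N\geq K$ with uncoded placement,~\eqref{eq:unCCoptimality} gives $R^{*}_{ucc}(M)=R_{cc}(M)$, whence $R(M)\leq 2R^{*}_{ucc}(M)$. In general,~\eqref{eq:CCoptimality} gives $R^{*}_{cc}(M)\geq R_{cc}(M)/4$, so $R(M)\leq 2R_{cc}(M)\leq 8R^{*}_{cc}(M)$. Note that $R^{*}_{cc}(M)$ is itself a valid lower bound on the optimal cache-aided MuPIR rate, since any MuPIR scheme is a coded caching scheme once privacy is ignored; thus both displays translate into order-optimality against the true MuPIR optimum.

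There is no genuine obstacle here: the entire argument reduces to one estimate on a truncated geometric series in $1/B$. The only point requiring a short line of care is the extension from integer to non-integer memory points through memory sharing, and the consistency check that the coded-caching optimum is an acceptable benchmark against which order-optimality of the MuPIR rate is declared.
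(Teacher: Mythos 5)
Your proposal is correct and follows essentially the same route as the paper: it writes the MAN-PDA rate as $R_{cc}(M)$ multiplied by the truncated geometric series, bounds that factor by $\frac{B}{B-1}\leq 2$ for $B\geq 2$, and then invokes~\eqref{eq:unCCoptimality} and~\eqref{eq:CCoptimality} together with the observation that the coded-caching optimum lower-bounds the MuPIR optimum. The extra remarks on memory sharing and on the benchmark being legitimate are welcome but do not change the argument.
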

\begin{proof}
  The rate achieved in Theorem~\ref{th:achievability} corresponding to the MAN-PDA is $R(M)$. The achievable scheme provided in Section~\ref{sec:sub:general-scheme} has uncoded placement. Therefore,
  \begin{align*}
    R(M) &= R_{cc}(M) \RPIRg{B}{(\frac{KM}{N}+1)(N-1)} \\
         &\leq R_{cc}(M) \frac{B}{B-1} \\
         &\stackrel{\eqref{eq:unCCoptimality}}{=} R_{ucc}^{*}(M) \frac{B}{B-1} \\
         &\stackrel{\eqref{eq:CCoptimality}}{\leq} 4R_{cc}^{*}(M) \frac{B}{B-1}.
  \end{align*}
  Now, as we have $B \geq 2 \implies \frac{B}{B-1} \leq 2$ and the $R_{cc}^{*}(M) \leq R^{*}(M)$ and $R_{ucc}^{*}(M) \leq R^{*}(M)$ because the optimal rate with PIR constraint cannot be less than the optimal rate without the PIR constraint, we have
  \begin{align*}
    R(M) &\leq 2 R^{*}_{ucc}(M) \mbox{ and } \\
    R(M) &\leq 8 R^{*}_{cc}(M).
  \end{align*}
\end{proof}

\subsection{Schemes based on PDAs described in~\cite{QYan17PDA}}\label{sec:sub:lowSubPDA}
As described in Section~\ref{sec:sub:CAPIR_intro}, the product design  proposed in~\cite{Ming21CaMuPIR} for general $N$, $K$, $B$ achieves rate
\begin{align*}
  R_{\mathrm{ PD}}(M) &= \frac {K-t}{t+1} \RPIR{B}{N}
\end{align*}
with subpacketization level $F_{PD} = B^{N}\binom{K}{t}$ and upload cost $U_{PD} = (t+1)\binom{K}{t+1}BN\log_{2}(\frac{B^{N}!}{B^{N-1}!})$.

For comparison, consider a multi-user PIR scheme based on the PDA constructions given in~\cite{QYan17PDA}, and described in Section~\ref{sec:sub:PDA_intro}. For $M/N = 1/q$ we have a $(m+1)\mhyphen\big( q(m+1),q^{m},q^{m-1},q^{m+1}-q^{m} \big)$ PDA and for $M/N = 1 - (1/q)$ we have a $(q-1)(m+1)\mhyphen\big(q(m+1),(q-1)q^{m}, (q-1)^{2}q^{m-1},q^{m}\big)$ PDA. According to Theorem~\ref{th:achievability}, $K = q(m+1)$ cache equipped users can privately retrieve their desired files from $B$ servers, with rate
\begin{align}\label{eq:qRate}
	R_{new}\Big(\frac{1}{q}\Big) &= (q - 1) \RPIR{B}{(m+1)(N-1)} \\
	R_{new}\Big(1 - \frac{1}{q}\Big) &= \frac{1}{(q - 1)} \RPIR{B}{(q-1)(m+1)(N-1)}.
\end{align}

In Table~\ref{tab:pda-vs-pd} we present the rate, subpacketization level and upload cost for the MuPIR schemes based on PDAs stated above and the product design.

\begin{table}[h!]
  \centering
  \begin{tabular}{| c | c | c | c |}
    \hline
    Parameter & Rate & Subpacketization & Upload cost \\
    \hline
    \hline
    PDA $\frac{M}{N} = \frac{1}{q}$ & $\frac{N-M}{M} \RPIRgshort{B}{\frac{KM}{N}(N-1)}$ & $(B-1)\frac{M}{N}{\Big( \frac{N}{M} \Big)}^{\frac{KM}{N}-1}$ & $BK(N-1)\log_{2}B$ \\
    \hline
    PDA $\frac{M}{N} = 1 - \frac{1}{q}$ & $\frac{N-M}{M} \RPIRgshort{B}{\frac{KM}{N}(N-1)}$ & $(B-1)\frac{M}{N}{\Big(\frac{N}{N-M}\Big)}^{K - \frac{KM}{N}}$ & $BK(N-1)\log_{2}B$ \\
    \hline
    Product Design & $\frac{K-\frac{KM}{N}}{\frac{KM}{N}+1}\RPIRshort{B}{N}$& $B^{N} \binom{K}{\frac{KM}{N}}$ & $\binom{K}{t + 1} (t + 1) \log_{2}{\Big( \frac{B^{N}!}{B^{N-1}!} \Big)}^{BN}$ \\
    \hline
  \end{tabular}
  \caption{Comparing rate, subpacketization level and upload cost of MuPIR schemes based on PDAs provided in~\cite{QYan17PDA} to the product design~\cite{Ming21CaMuPIR}.}\label{tab:pda-vs-pd}
\end{table}

We compare the rate achieved in our scheme using the above mentioned PDAs to the rate achieved using product design for a setup with $B = 10$ servers, $N = 18$ files and $q = m = 3$. In such a  setup we have $K = 12$ users. For $\frac{M}{N} = \frac{1}{3}$ we have a $(12 , 27 , 9 , 54)$ PDA and for $\frac{M}{N} = \frac{2}{3}$ we have a $(12 , 54 , 36 , 27)$ PDA (from~\cite{QYan17PDA}). For $M = 0$, when users don't have access to any cache, we consider the trivial $(12 , 1 , 0 , 12)$ PDA. When $\frac{M}{N} = 1$, users can store all the files in their cache, and servers need not transmit anything. For this example, rates for the product design and PDA-based schemes are plotted in Fig.~\ref{fig:rate-compare1}.

\begin{figure}
  \centering
	\includegraphics[width = 0.7\textwidth]{./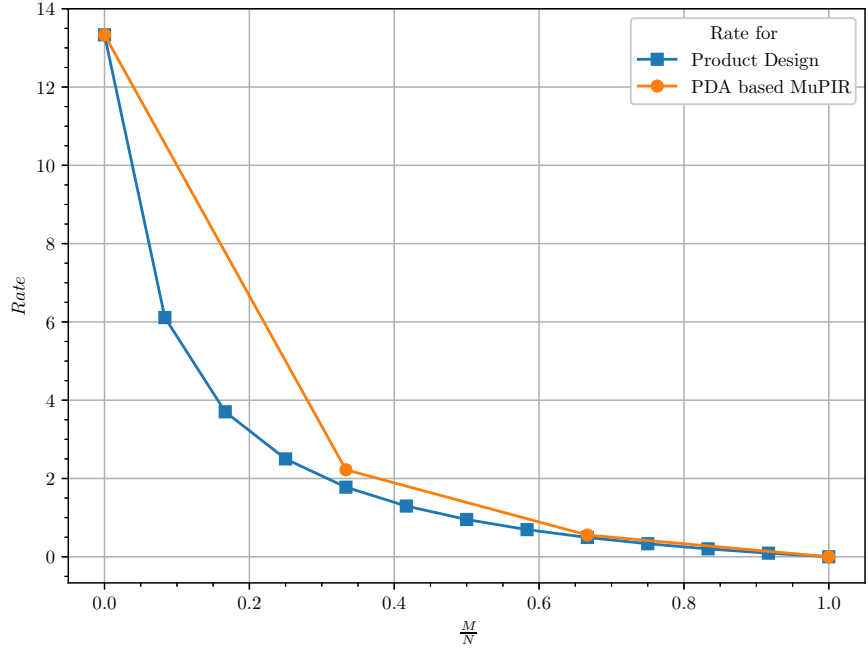}
  \caption{Comparing rates for Product design and PDA based MuPIR strategy with $B=10$ servers, $N=18$ files and $K = 12$ users for various values of $M/N$.}
  \label{fig:rate-compare1}
\end{figure}
The rate achieved using the PDA based scheme is higher that the rate achieved by product design. But, as we will see, the PDA based scheme offer better performance in terms of upload cost and subpacketization level, whereas the rate is only marginally worse in our case.

To compare these two systems, we will consider PDAs with $q = 3$ and we will vary $m$. This is equivalent to increasing the number of users $K = q(m+1)$ in the problem. The caching ratio $M/N = 1/q$ remains constant, but the total system memory i.e. $t = KM/N = m+1$ increases with $m$. Again we will consider $B = 10$ servers and $N = K = q(m+1)$ files.

The rate is given in \eqref{eq:qRate}, the subpacketization level is $F_{new} = q^{m}$ and the upload cost is $U_{new} = BK(N-1)\log_{2}B$ bits. Also note that $t = \frac{KM}{N} = \frac{q(m+1)}{q} = m + 1$.

Consider the quantity
\begin{align*}
	\frac{R_{new}}{R_{PD}} &= \frac{ \frac{S}{F} \Bigg( 1 + \frac{1}{S}\sum_{s \in [S]} \RPIRs{B}{|\Kappa_{s}| (N-1)} \Bigg) }{\frac {K-t}{t+1} \RPIR{B}{N}} \\
  &= \frac{ (q-1) \RPIR{B}{(N-1)(m+1)} }{\frac{q(m+1)-(m+1)}{m+2} \RPIR{B}{N}} \\
  &= \frac{m+2}{m+1} \times \frac{1 - B^{m - mN - N}}{1 - B^{-N}} \\
  &= \frac{KM/N + 1}{KM/N} \times \frac{1 - B^{-\frac{KM}{N}(N-1) - 1}}{1 - B^{-N}}.
\end{align*}
As $K \rightarrow \infty$ we can see that $\frac{R_{new}}{R_{PD}} \rightarrow \frac{1}{1 - B^{-N}}$. Again, if number of servers and/or number of files stored in the servers are large, the ratio $\frac{1}{1 - B^{-N}} \rightarrow 1$. For $q=3$, $B=10$ servers, $N=300$ files and $\frac{M}{N} = \frac{1}{3}$ and varying number of users we compare the rates in Fig.~\ref{fig:rate-compareK}
\begin{figure}
  \centering
	\includegraphics[width = 0.7\textwidth]{./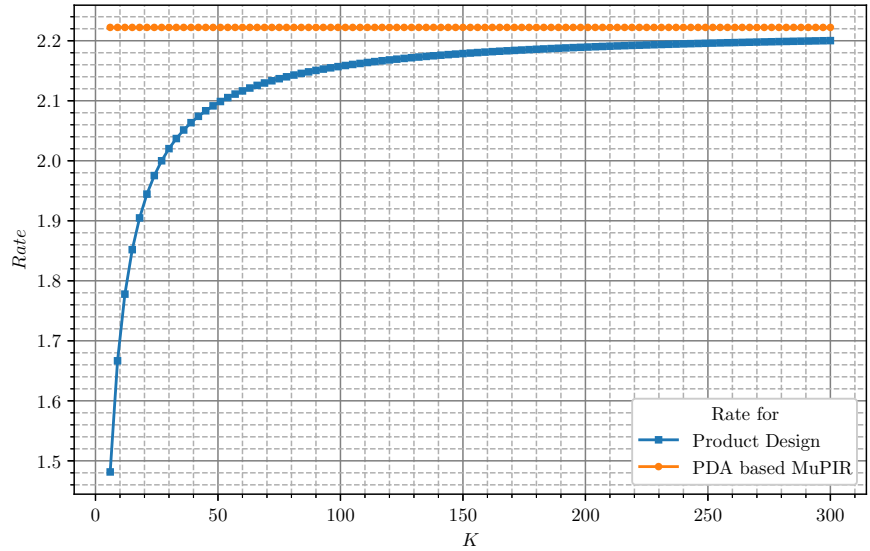}
  \caption{Comparing rates for Product design and PDA based MuPIR strategy for $B=10$ servers, $M/N = 1/3$, $N = 300$ files and varying number of users.}
  \label{fig:rate-compareK}
\end{figure}

For comparing subpacketization in these two schemes, consider the term
\begin{align*}
	\frac{F_{new}}{F_{PD}} &= \frac{(B-1) q^{m}}{B^{N}\binom{K}{t}} = \frac{(B-1){\frac{N}{M}}^{\frac{KM}{N} - 1}}{B^{N}\binom{K}{KM/N}}.
\end{align*}
For given $M$, $N$ and $K$ we can see that $\frac{F_{new}}{F_{PD}} \propto \frac{B-1}{B^{N}}$, and therefore $\frac{F_{new}}{F_{PD}} \rightarrow 0$ as the number of servers, $B$, increases. Also, for a given caching ratio, $M/N$, $\frac{{(N/M)}^{(KM/N) - 1}}{\binom{K}{KM/N}} \rightarrow 0$ as $K \rightarrow \infty$.

Again consider the example with $B = 10$ servers, $N = 18$ files and $q = 3$. Here we will vary the number of users by changing $m$. We will consider $M/N = 1/q = 1/3$. In Fig.~\ref{fig:subpacket-compare1} we plot $F_{new}$ and $F_{PD}$ against $K$. We can see that the subpacketization level is very high for the product design proposed in~\cite{Ming21CaMuPIR} compared to the PDA-based strategy of Theorem~\ref{th:achievability}.
\begin{figure}
  \centering
	\includegraphics[width = 0.7\textwidth]{./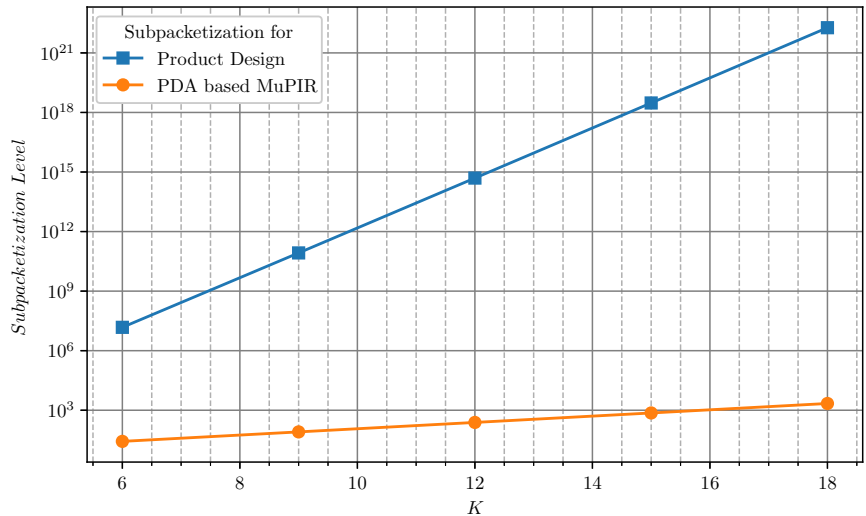}
  \caption{Comparing subpacketization level for Product design and PDA based MuPIR strategy for $B=10$ servers, $M/N = 1/3$ and varying number of users considering $N=K$.}
  \label{fig:subpacket-compare1}
\end{figure}

To compare the upload costs, consider the term
\begin{align*}
	\frac{U_{new}}{U_{PD}} &= \frac{ BK(N-1)\log_{2}B }{ (\frac{KM}{N}+1)\binom{K}{\frac{KM}{N}+1}BN\log_{2}(\frac{B^{N}!}{B^{N-1}!}) } \\
  &= \frac{K}{(\frac{KM}{N} + 1) \binom{K}{\frac{KM}{N} + 1}} \times \frac{N-1}{N} \times \frac{\log(B)}{ \log(\frac{B^{N}!}{B^{N-1}!})}.
\end{align*}
Here, for some fixed caching ratio, $M/N$, number of servers $B$ and number of files $N$, we have, $\frac{U_{new}}{U_{PD}} \propto \frac{K}{(\frac{KM}{N} + 1) \binom{K}{\frac{KM}{N} + 1}}$, therefore, $\frac{U_{new}}{U_{PD}} \rightarrow 0$ as $K$ increases. Also, for a fixed caching ratio $M/N$ and number of users $K$ consider
\begin{align*}
	\frac{U_{new}}{U_{PD}} &\propto \frac{(N-1) \log(B)}{ N \log(\frac{B^{N}!}{B^{N-1}!}) } \\
                         &= \frac{(N-1) \log(B)}{N (\log(B^{N}) + \cdots + \log(B^{N-1} + 1))} \\
                         &= \frac{(N-1) \log(B)}{N \sum_{n = 1}^{B^{N} - B^{N-1}}(\log(B^{N-1} + n))} \\
  &< \frac{(N-1) \log(B)}{N \sum_{n = 1}^{B^{N} - B^{N-1}}(\log(B^{N-1}))} = \frac{(N-1) \log(B)}{N (B^{N} - B^{N-1}) (\log(B^{N-1}))} \\
  &= \frac{1}{N(B^{N} - B^{N-1})}.
\end{align*}
So, if either the number of files or the number of servers increases, we have $\frac{U_{new}}{U_{PD}} \rightarrow 0$.

Consider a cache-aided multi-user setup with $B = 2$ servers, $N = 18$ files and $q = 3$. Here we will vary the number of users by changing $m$. We will consider $M/N = 1/q = 1/3$. In Fig.~\ref{fig:upload-compare1} we plot $U_{new}$ and $U_{PD}$ against $K$. We can see that the upload cost is very high for the product design proposed in~\cite{Ming21CaMuPIR} compared to the PDA-based strategy of Theorem~\ref{th:achievability}.
\begin{figure}
  \centering
	\includegraphics[width = 0.7\textwidth]{./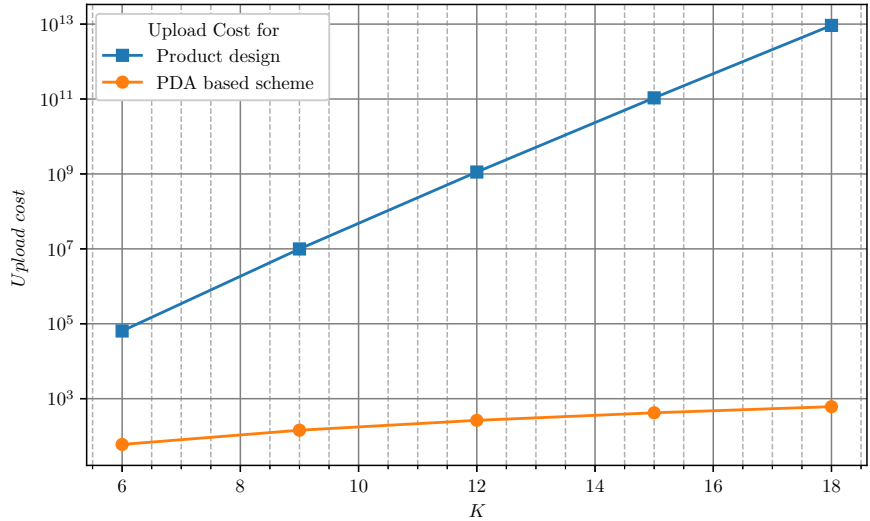}
  \caption{Comparing upload cost for Product design and PDA based MuPIR strategy for $B=10$ servers, $M/N = 1/3$ and varying number of users considering $N=K$.}
  \label{fig:upload-compare1}
\end{figure}

\subsection{Schemes based on other known PDAs}\label{sec:sub:other-PDA}
In this section, we summarize some other known PDAs and briefly discuss MuPIR results based on those PDAs.

In~\cite{Yan18EdgeColorPDA} for any $m,a,b,\lambda \in \Nbb$ with $0<a,b<m,0 \leq \lambda \leq \min\{a,b\}$ , a $g-(K,F,Z,S)$ PDA construction is provided, where
\begin{align*}
  K=&{\binom{m}{ a}},\quad F={\binom{m}{ b}},~Z={\binom{m}{ b}}-{\binom{a}{ \lambda }}{\binom{m-a}{ b-\lambda }}, \\
  S=&{\binom{m}{ a+b-2\lambda }}\cdot \min \left \{{{\binom{m\!-\!a\!-\!b\!+\!2\lambda }{ \lambda }},{\binom{a\!+\!b\!-\!2\lambda }{ a-\lambda }}}\right \}\!,
\end{align*}
and
\begin{equation*}
  g=\max \left \{{{\binom{a+b-2\lambda }{ a-\lambda }},{\binom{m-a-b+2\lambda }{ \lambda }}}\right \}\!.
\end{equation*}

In~\cite{Li22CombinatorialPDA}, the following three PDA constructions are provided using combinatorial designs.

{\it Construction 1~\cite[Theorem 4]{Li22CombinatorialPDA}:} If there exist a simple $(v,k,2)$-SBIBD then there exist a $k-(vk,v,v-k+1,v(k-1))$ PDA.

{\it Construction 2~\cite[Theorem 5]{Li22CombinatorialPDA}:} If there exist a $t-(v,k,1)$ design for any $i\in[1,t-1]$, then there exist a $\frac {\binom {v-i}{t-i}}{\binom {k-i}{t-i}} - \left({\frac {\binom v{t}\binom {k}{i}}{\binom {k}{t}},\binom {v}{t-i}, \binom {v}{t-i}-\binom {k-i}{t-i},\,\,\binom {k-i}{t-i}\binom {v}{i}}\right)$ PDA.

{\it Construction 3~\cite[Theorem 6]{Li22CombinatorialPDA}:} If there exist a simple $t-(v,k,\lambda)$ design and $k \leq 2t$ then there exist a $\frac {\lambda \binom {v}{t}}{\binom {v}{k-t}} - \left({\binom v{t},\frac {\lambda \binom v{t}}{\binom k{t}},\frac {\lambda \binom v{t}}{\binom k{t}}-\lambda,\binom v{k-t}}\right)$ PDA.

In Table~\ref{tab:known-pdas}, we enumerate the results of MuPIR schemes based on the PDAs mentioned above using the definition
\[
  R_{PIR}(B , N) \triangleq \left( 1 + \frac{1}{B} + \cdots + \frac{1}{B^{N}}\right).
\]

\begin{table}[h!]
  \centering
  \begin{tabular}{| p{0.3\textwidth} | p{0.3\textwidth} | c | c |}
    \hline
    PDA Description & Rate & Subpacketization & Upload cost \\
    \hline
    \hline
    {\tiny $\Big({\binom{m}{ a}},{\binom{m}{ b}},{\binom{m}{b}}-{\binom{a}{ \lambda }}{\binom{m-a}{ b-\lambda }} , {\binom{m}{ a+b-2\lambda }}\cdot \min \left \{{{\binom{m\!-\!a\!-\!b\!+\!2\lambda }{ \lambda }},{\binom{a\!+\!b\!-\!2\lambda }{ a-\lambda }}}\right \}\!\Big)$}
    PDA where $m,a,b,\lambda \in \Nbb$ with $0<a, b<m, 0\leq \lambda \leq \min \{a,b\}$~\cite{Yan18EdgeColorPDA} & {\tiny $\frac{{\binom{m}{ a+b-2\lambda }}\cdot \min \{{{\binom{m\!-\!a\!-\!b\!+\!2\lambda }{ \lambda }},{\binom{a\!+\!b\!-\!2\lambda }{ a-\lambda }}}\}}{\binom{m}{b}} \cdot R_{PIR}\Big(B , (N-1)\cdot \newline \max \left \{{{\binom{a+b-2\lambda }{ a-\lambda }}, {\binom{m-a-b+2\lambda }{ \lambda }}} \right \}\Big)$} & $(B-1)\binom{m}{b}$ & $B\binom{m}{a}(N-1)\log_{2}B$ \\
    \hline
    $(vk,v,v-k+1,v(k-1))$ PDA from a $v,k,2$-SBIBD~\cite[Theorem 4]{Li22CombinatorialPDA} & $(k-1)R_{PIR}(B , k(N-1))$ & $v(B-1)$ & $Bvk(N-1)\log_{2}B$ \\
    \hline
    {\tiny $\left({\frac {\binom{v}{t} \binom{k}{i}} {\binom {k}{t}},\binom {v}{t-i}, \binom {v}{t-i}-\binom {k-i}{t-i}, \binom {k-i}{t-i}\binom {v}{i}}\right)$} PDA from a $t-(v,k,1)$-design~\cite[Theorem 5]{Li22CombinatorialPDA} & $\frac{\binom {k-i}{t-i}\binom {v}{i}}{\binom{v}{t-i}} \cdot R_{PIR}\left(B , (N-1)\frac {\binom {v-i}{t-i}}{\binom {k-i}{t-i}}\right)$ & $\binom{v}{t-i}(B-1)$ & $B\frac {\binom{v}{t} \binom{k}{i}} {\binom {k}{t}}(N-1)\log_{2}B$ \\
    \hline
    $\left({\binom v{t},\frac {\lambda \binom v{t}}{\binom k{t}},\frac {\lambda \binom v{t}}{\binom k{t}}-\lambda,\binom v{k-t}}\right)$ PDA from $t-(v , k , \lambda)$ design where $k \leq 2t$~\cite[Theorem 6]{Li22CombinatorialPDA}. & $\frac{\binom{v}{k-t}}{\lambda \binom{v}{t} / \binom{k}{t}} \cdot R_{PIR}\left(B , \frac {\lambda \binom {v}{t}}{\binom {v}{k-t}}(N-1)\right)$ & $(B-1)\frac {\lambda \binom v{t}}{\binom k{t}}$ & $B\binom{v}{t}(N-1)\log_{2}B$ \\
    \hline
  \end{tabular}
  \caption{Some known PDAs and MuPIR results on schemes based on those PDAs.}\label{tab:known-pdas}
\end{table}

\section{Our Achievable Scheme}\label{sec:scheme-section}

In this section, we describe an achievable scheme for cache-aided multi-user PIR setup, using a $(K,F,Z,S)$ PDA, thus proving Theorem~\ref{th:achievability}. We first describe the achievable scheme using an example.

\subsection{Example}\label{sec:sub:example-scheme}
Consider a cache-aided MuPIR system with $B = 3$ servers, storing $N=6$ files $\{W_{0} , \ldots , W_{5}\}$. We consider a cache-aided setup corresponding to the following PDA:
\begin{equation}
  \Pbf =\left [{\begin{array}{cccccc}
                  *& *& 1& *& 2& 3\\
                  *& 1& *& 2& *& 4\\
                  1& *& *& 3& 4& *\\
                  2& 3& 4& *& *& *
                \end{array} }\right ].
\end{equation}
There are $K = 6$ cache-equipped users, each capable of storing three files in their cache.
{\bf Placement Phase}: Each file is divided into four non-overlapping equal subfiles
\begin{align*}
 W_{0} &= \{W_{0}^{1} , W_{0}^{2} , W_{0}^{3} , W_{0}^{4}\} , & W_{1} &= \{W_{1}^{1} , W_{1}^{2} , W_{1}^{3} , W_{1}^{4}\}, \\
 W_{2} &= \{W_{2}^{1} , W_{2}^{2} , W_{2}^{3} , W_{2}^{4}\} , & W_{3} &= \{W_{3}^{1} , W_{3}^{2} , W_{3}^{3} , W_{3}^{4}\}, \\
 W_{4} &= \{W_{4}^{1} , W_{4}^{2} , W_{4}^{3} , W_{4}^{4}\} , & W_{5} &= \{W_{5}^{1} , W_{5}^{2} , W_{5}^{3} , W_{5}^{4}\}.
\end{align*}
Then, subfile $f$ of every file is placed in the cache of user $k$ if $\Pbf_{f , k} = *$. The content stored in the cache of the users is as follows:
\begin{align*}
  &\Zcal_{1} = \{W_{n}^{1} , W_{n}^{2} : n \in [0:5]\} , &  \Zcal_{2} &= \{W_{n}^{1} , W_{n}^{3} : n \in [0:5]\}, \\
  &\Zcal_{3} = \{W_{n}^{2} , W_{n}^{3} : n \in [0:5]\} , &  \Zcal_{4} &= \{W_{n}^{1} , W_{n}^{4} : n \in [0:5]\}, \\
  &\Zcal_{5} = \{W_{n}^{2} , W_{n}^{4} : n \in [0:5]\} , &  \Zcal_{5} &= \{W_{n}^{3} , W_{n}^{4} : n \in [0:5]\}.
\end{align*}
{\bf Private Delivery Phase}: In this phase, every user independently and uniformly chooses a file to retrieve from the servers. Let the demand of user $k$ be denoted by $d_{k}$. We consider
\begin{align*}
	d_{1} &= 3 , &  d_{2} &= 1 , &  d_{3} &= 0 , \\
	d_{4} &= 4 , &  d_{5} &= 5 , &  d_{1} &= 1
\end{align*}
i.e.\ the demand vector is $\dbf = (3 , 1 , 0 , 4 , 5 , 1)$. Now every user generates a uniformly random vector of length $N-1 = 5$ over $[0:B-1] = [0:2]$. User $k$ generates $\Vbf^{k} = (V_{0}^{k} \ldots V_{4}^{k})$, let
\begin{align*}
	\Vbf^{1} &= (1 , 0 , 1 , 2 , 0),   &   \Vbf^{2} &= (0 , 1 , 1 , 0 , 1), \\
	\Vbf^{3} &= (1 , 2 , 2 , 0 , 2),   &   \Vbf^{4} &= (0 , 0 , 1 , 2 , 2), \\
	\Vbf^{5} &= (0 , 0 , 1 , 0 , 2),   &   \Vbf^{6} &= (0 , 1 , 0 , 1 , 0).
\end{align*}
We also note the values of $\overline{V}^{k} \triangleq {\Big( -\sum_{n \in [0:4]} V_{n}^{k} \Big)}_{B}, \forall k \in [6]$.
\begin{align*}
	\overline{V}^{1} &= 2,   &   \overline{V}^{2} &= 0, \\
	\overline{V}^{3} &= 2,   &   \overline{V}^{4} &= 1, \\
	\overline{V}^{5} &= 0,   &   \overline{V}^{6} &= 1.
\end{align*}
Now, each user constructs three $N = 6$ length vectors for every server over $[0:B-1] = [0:2]$. Specifically, consider user $1$ wanting file $W_{3}$. User $1$ generates three vectors as follows:
\begin{align*}
	\Qbf_{0}^{1} &= (V^{0} , V^{1} , V^{2} , {\Big(0 + \overline{V}^{1}\Big)}_{B} , V^{3} , V^{4}) = (1 , 0 , 1 , {\bf 2} , 2 , 0),\\
	\Qbf_{1}^{1} &= (V^{0} , V^{1} , V^{2} , {\Big(1 + \overline{V}^{1}\Big)}_{B} , V^{3} , V^{4}) = (1 , 0 , 1 , {\bf 0} , 2 , 0),\\
	\Qbf_{2}^{1} &= (V^{0} , V^{1} , V^{2} , {\Big(2 + \overline{V}^{1}\Big)}_{B} , V^{3} , V^{4}) = (1 , 0 , 1 , {\bf 1} , 2 , 0).
\end{align*}
Note that only the $4^{th}$ coordinate (which corresponds to file $W_{3}$ considering 0-indexing) of these three vectors are different, also taking the sum of each of the elements of $\Qbf_{b}^{1}$ modulo $B$ is equal to $b$ for any $b \in [0:2]$. User $1$ sends $\Qbf_{b}^{1}$ to server $b, \forall b \in [0:B-1]$. Table~\ref{tab:query_table} contains all the queries sent to the servers by the users.

\begin{table}
  \centering
  \begin{tabular}{| c | c | c | c |}
    \hline
    $\Qbf_{b}^{k}$ &  Server $0$                &  Server $1$                     &  Server $2$ \\
    \hline
    User $1$  &  (1 , 0 , 1 , {\bf 2} , 2 , 0)  &  (1 , 0 , 1 , {\bf 0} , 2 , 0)  &  (1 , 0 , 1 , {\bf 1} , 2 , 0) \\
    User $2$  &  (0 , {\bf 0} , 1 , 1 , 0 , 1)  &  (0 , {\bf 1} , 1 , 1 , 0 , 1)  &  (0 , {\bf 2} , 1 , 1 , 0 , 1) \\
    User $3$  &  ({\bf 2} , 1 , 2 , 2 , 0 , 2)  &  ({\bf 0} , 1 , 2 , 2 , 0 , 2)  &  ({\bf 1} , 1 , 2 , 2 , 0 , 2) \\
    User $4$  &  (0 , 0 , 1 , 2 , {\bf 1} , 2)  &  (0 , 0 , 1 , 2 , {\bf 2} , 2)  &  (0 , 0 , 1 , 2 , {\bf 0} , 2) \\
    User $5$  &  (0 , 0 , 1 , 0 , 2 , {\bf 0})  &  (0 , 0 , 1 , 0 , 2 , {\bf 1})  &  (0 , 0 , 1 , 0 , 2 , {\bf 2}) \\
    User $6$  &  (0 , {\bf 1} , 1 , 0 , 1 , 0)  &  (0 , {\bf 2} , 1 , 0 , 1 , 0)  &  (0 , {\bf 0} , 1 , 0 , 1 , 0) \\
    \hline
  \end{tabular}
  \caption{Queries sent by the users to the servers.}\label{tab:query_table}
\end{table}

After receiving their respective queries, the servers construct answers based on these queries, the files and the PDA. Consider again the PDA $\Pbf$, every server construct a transmission for every $s \in [S] = [4]$. Servers divide each subfile into $B-1 = 2$ sub-subfiles.
\[
  W_{n}^{f} = (W_{n , 1}^{f} , W_{n , 2}^{f}) ,~\forall (n , f) \in [0:5] \times [4].
\]
We also define $W_{n,0}^{f} = 0$ for any $n \in [0:5]$ and any $f \in [4]$. Consider $s = 1$, note that $\Pbf_{f,k} = 1$ only for $(f,k) \in \{(1,3) , (2,2) , (3,1)\}$. Server $b$ computes
\[
  \sum_{(f,k) \in \{(1,3) , (2,2) , (3,1)\}} \sum_{n = 0}^{5} W_{n , \Qbf_{b , n}^{k}}^{f}
\]
and broadcasts it to the users. We enumerate the broadcasts corresponding to $s = 1$ in Table~\ref{tab:answer_table}.
\begin{table}
	\centering
  \begin{tabular}{| p{0.2\textwidth} | p{0.2\textwidth} | p{0.2\textwidth} |}
    \hline
    Server $0$ & Server $1$ & Server $2$ \\
    \hline
    $({\bf W_{0,2}^{1}} + W_{1,1}^{1} + W_{2,2}^{1} \newline + W_{3,2}^{1} + W_{4,0}^{1} + W_{5,2}^{1}) \newline + (W_{0,0}^{2} + {\bf W_{1,0}^{2}} + W_{2,1}^{2} \newline + W_{3,1}^{2} + W_{4,0}^{2} + W_{5,1}^{2}) \newline + (W_{0,1}^{3} + W_{1,0}^{3} + W_{2,1}^{3} \newline + {\bf W_{3,2}^{3}} + W_{4,2}^{3} + W_{5,0}^{3})$  &%
    $({\bf W_{0,0}^{1}} + W_{1,1}^{1} + W_{2,2}^{1} \newline + W_{3,2}^{1} + W_{4,0}^{1} + W_{5,2}^{1}) \newline + (W_{0,0}^{2} + {\bf W_{1,1}^{2}} + W_{2,1}^{2} \newline + W_{3,1}^{2} + W_{4,0}^{2} + W_{5,1}^{2}) \newline + (W_{0,1}^{3} + W_{1,0}^{3} + W_{2,1}^{3} \newline + {\bf W_{3,0}^{3}} + W_{4,2}^{3} + W_{5,0}^{3})$  &%
    $({\bf W_{0,1}^{1}} + W_{1,1}^{1} + W_{2,2}^{1} \newline + W_{3,2}^{1} + W_{4,0}^{1} + W_{5,2}^{1}) \newline + (W_{0,0}^{2} + {\bf W_{1,2}^{2}} + W_{2,1}^{2} \newline + W_{3,1}^{2} + W_{4,0}^{2} + W_{5,1}^{2}) \newline + (W_{0,1}^{3} + W_{1,0}^{3} + W_{2,1}^{3} \newline + {\bf W_{3,1}^{3}} + W_{4,2}^{3} + W_{5,0}^{3})$ \\
    \hline
  \end{tabular}
  \caption{Answers broadcasted by the servers corresponding to $s = 1$}\label{tab:answer_table}
\end{table}

Now we will demonstrate that users $1$, $2$ and $3$ will retrieve subfiles $3$, $2$ and $1$ of their desired files, respectively. As user $3$ has access to subfiles $2$ and $3$ of every file, it can cancel out the interference terms from these transmissions and be left with the terms $A_{b,s}^{f,k}, b \in [0:B-1]$ defined as:
\begin{align*}
  &{\bf W_{0,2}^{1}} + W_{1,1}^{1} + W_{2,2}^{1} + W_{3,2}^{1} + W_{4,0}^{1} + W_{5,2}^{1} \triangleq A_{0,1}^{3,1}, \\
  &{\bf W_{0,0}^{1}} + W_{1,1}^{1} + W_{2,2}^{1} + W_{3,2}^{1} + W_{4,0}^{1} + W_{5,2}^{1} \triangleq A_{1,1}^{3,1} \mbox{ and} \\
  &{\bf W_{0,1}^{1}} + W_{1,1}^{1} + W_{2,2}^{1} + W_{3,2}^{1} + W_{4,0}^{1} + W_{5,2}^{1} \triangleq A_{2,1}^{3,1}.
\end{align*}
Recalling that $W_{n,0}^{f} = 0$, the second term above is $A_{1,1}^{3,1} = W_{0,1}^{1} + W_{0,2}^{1} + W_{0,2}^{1} + W_{0,0}^{1} + W_{0,2}^{1}$. Therefore using this second term user 3 gets
\begin{align*}
	{\bf W_{0,2}^{1}} &= A_{0,1}^{1,3} - A_{1,1}^{3,1} \mbox{ and } \\
  {\bf W_{0,1}^{1}} &= A_{2,1}^{1,3} - A_{1,1}^{3,1}.
\end{align*}
 Similarly, user $2$ can cancel out the interference of subfiles $1$ and subfile $3$ and will remain with
\begin{align*}
  &W_{0,0}^{2} + {\bf W_{1,0}^{2}} + W_{2,1}^{2} + W_{3,1}^{2} + W_{4,0}^{2} + W_{5,1}^{2} \triangleq A_{0,1}^{2,2}, \\
  &W_{0,0}^{2} + {\bf W_{1,1}^{2}} + W_{2,1}^{2} + W_{3,1}^{2} + W_{4,0}^{2} + W_{5,1}^{2} \triangleq A_{1,1}^{2,2} \mbox{ and} \\
  &W_{0,0}^{2} + {\bf W_{1,2}^{2}} + W_{2,1}^{2} + W_{3,1}^{2} + W_{4,0}^{2} + W_{5,1}^{2} \triangleq A_{2,1}^{2,2}.
\end{align*}
Again, as ${\bf W_{1,0}^{2}} = 0$, user $2$ decodes its required subfiles as
\begin{align*}
	{\bf W_{1,1}^{2}} &= A_{1,1}^{2,2} - A_{0,1}^{2,2} \mbox{ and } \\
  {\bf W_{1,2}^{2}} &= A_{2,1}^{2,2} - A_{0,1}^{2,2}.
\end{align*}
User $1$ removes the interference of subfiles $1$ and subfile $2$ and is left with
\begin{align*}
  &W_{0,1}^{3} + W_{1,0}^{3} + W_{2,1}^{3} + {\bf W_{3,2}^{3}} + W_{4,2}^{3} + W_{5,0}^{3} \triangleq A_{0,1}^{1,3}, \\
  &W_{0,1}^{3} + W_{1,0}^{3} + W_{2,1}^{3} + {\bf W_{3,0}^{3}} + W_{4,2}^{3} + W_{5,0}^{3} \triangleq A_{1,1}^{1,3} \mbox{ and} \\
  &W_{0,1}^{3} + W_{1,0}^{3} + W_{2,1}^{3} + {\bf W_{3,1}^{3}} + W_{4,2}^{3} + W_{5,0}^{3} \triangleq A_{2,1}^{1,3},
\end{align*}
and decodes the desired subfiles as
\begin{align*}
	{\bf W_{3,1}^{3}} &= A_{2,1}^{1,3} - A_{1,1}^{1,3} \mbox{ and } \\
  {\bf W_{3,2}^{3}} &= A_{0,1}^{1,3} - A_{1,1}^{1,3}.
\end{align*}

Similarly, user $k$ for any $k \in [6]$ can decode its desired file. Specifically, user $k$ decode subfile $f$ from the transmission corresponding to that $s \in [4]$ for which $\Pbf_{f,k} = s$. Whereas, if $\Pbf_{f,k} = *$ then subfile $f$ is stored in the cache of user $k$.

Here, we see that every server is broadcasting a transmission of size equal to one sub-subfile for every $s \in [S] = [4]$. As every file is divided into $F = 4$ subfiles and every subfile is divided into $B-1 = 2$ sub-subfiles, every server is transmitting $L / 8$ bits for every $s \in [4]$. Therefore, the total number of bits transmitted is
\[
  \sum_{b \in [0:2]}R_{b}L = 1.5L \mbox{ bits}.
\]

\subsubsection*{Special cases}
Consider a special case when $\Vbf^{k} = (0,0,0,0,0) , \forall k \in \{1 , 2 , 3\}$, which can happen with a probability of $3^{-15}$. In that case, the queries sent to server $0$ is $\Qbf_{0}^{k} = (0,0,0,0,0,0), \forall k \in \{1 , 2 , 3\}$, and thus, the answer broadcasted by server $0$, for $s = 1$ will be
\begin{align*}
  &({\bf W_{0,0}^{1}} + W_{1,0}^{1} + W_{2,0}^{1} + W_{3,0}^{1} + W_{4,0}^{1} + W_{5,0}^{1}) \\
  + &(W_{0,0}^{2} + {\bf W_{1,0}^{2}} + W_{2,0}^{2} + W_{3,0}^{2} + W_{4,0}^{2} + W_{5,0}^{2}) \\
  + &(W_{0,0}^{3} + W_{1,0}^{3} + W_{2,0}^{3} + {\bf W_{3,0}^{3}} + W_{4,0}^{3} + W_{5,0}^{3}) = 0.
\end{align*}
This happens because $W_{n,0}^{f} = 0$ for any $n$ and $f$. As the answer computed by server $0$ is already determined by the queries, and is independent of the files, if $\Vbf^{k} = (0,0,0,0,0) , \forall k \in \{1 , 2 , 3\}$ then server $0$ will not broadcast anything. Similarly, if $\Vbf^{k} = (0,0,0,0,0) , \forall k \in \{1 , 4 , 5\}$, then, answers computed by server $0$ for $s = 2$ is predetermined by the queries itself, therefore server $0$ will not transmit anything for $s = 2$. Similarly, if $\Vbf^{k} = (0,0,0,0,0) , \forall k \in \{2 , 4 , 6\}$ or $\Vbf^{k} = (0,0,0,0,0) , \forall k \in \{3 , 5 , 6\}$, then server $0$ will not transmit for $s = 3$ or $s = 4$ respectively.

The average number of bits transmitted by server $0$ over all possible queries is
\begin{align*}
  \Ebb \Big\{ R_{0}L \Big\} &= 4 \times \Bigg( (3^{-15}) \cdot 0 + (1 - 3^{-15}) \cdot \frac{L}{8} \Bigg) \\
  \implies \Ebb \Big\{ R_{0}L \Big\} &= \frac{1 - 3^{-15}}{2}L.
\end{align*}
Transmissions from all other servers are dependent on the content of the files for all possible queries, so every server transmits $L/2$ bits irrespective of the realization of the queries.

\subsubsection*{Rate in this example} The average number of bits broadcasted by the servers is
\begin{align*}
	\Ebb \sum_{b = 0}^{2} R_{b}L &= \sum_{b=0}^{2} \Ebb \Big\{ R_{b}L \Big\} \\
                               &= \frac{1 - 3^{-15}}{2}L + L \\
                    \implies R &= \frac{3 - 3^{-15}}{2}
\end{align*}
which is equal to the rate provided in Theorem~\ref{th:achievability}.

\subsubsection*{Upload cost and Subpacketization Level} Every file is divided into $F = 4$ subfiles, and every subfile is divided into $B-1 = 2$ sub-subfiles. Therefore the subpacketization level is $8$. Every user is constructing one query for every server. Therefore, there are $BK = 18$ queries in total. Each query consists of $6$ integers, out of which $5$ integers are chosen uniformly from $[0:2]$, and the $6^{th}$ integer is dependent on the first five (the modulo $3$ sum of all six integers should be equal to the server index). Therefore, the upload cost is $90\log_{2}3$ bits.

\subsection{General Description}\label{sec:sub:general-scheme}
Consider a $(K , F , Z , S)$ PDA, $\Pbf$, and a cache-aided multi-user system with $B$ non-colluding servers each storing $N$ files. There are $K$ users, each equipped with a dedicated cache of size $ML$ bits where $M / N = Z / F$.
\subsubsection*{\textbf{Placement Phase}} During placement phase, each file is split into $F$ non-overlapping and equal subfiles i.e.
\[
  W_n = \{W_{n}^{f} : f \in [F]\} , \forall n \in [N].
\]
Then, the cache of user $k \in [K]$ is filled with
\[
  \Zcal_k = \{ W_{n}^{f} : \Pbf_{f , k} = *, \forall n \in [N]\}.
\]
Specifically, for user $k$ consider the $k$-th column of $\Pbf$. In the $k$-th column, if row $f$ have a $*$, then place subfiles $W_{n}^{f} , \forall n \in [N]$ in the cache of user $k$. By C1, we know that the symbol $``*"$ appears $Z$ times in each column, so each user stores $NZ$ subfiles. Since each subfile has size $L/F$ bits, the whole size of the cache is $NZL/F=ML$ bits, which satisfies users' cache constraint.
\subsubsection*{\textbf{Private Delivery Phase}} In this phase, each user will decide on their desired files. Let, user $k$ wants to retrieve file indexed by $d_k$. Then demand vector is $\dbf = {(d_k)}_{k \in [K]}$. Users don't want any of the servers to get any information about the demand vector. For that, users cooperatively generate queries for each server. Every user selects a random $N-1$ length vector independently and uniformly from ${[0:B-1]}^{N-1}$. User $k$ chooses $\Vbf^{k} = \{V_{0}^{k} , \ldots , V_{N-2}^{k}\}$, where $V_{n}^{k} \in [0:B-1]$.
After choosing these random vectors, each user generates $B$ vectors of length $N$ over $[0:B-1]$.
\begin{align}
  \Qbf_{b}^{k} &= \Bigg( V_{0}^{k} , V_{1}^{k} , \cdots , V_{d_k - 1}^{k} , {\Bigg( b - \sum_{n \in [0:N-2]} V_{n}^{k} \Bigg)}_{B} , \cdots , V_{N-2}^{k} \Bigg) \\
  &= \Big( Q_{b,0}^{k} , Q_{b,1}^{k} , \cdots , Q_{b,d_k - 1}^{k} , Q_{b,d_k}^{k} , \cdots , Q_{b,N-1}^{k} \Big)
\end{align}
where $Q_{b,n}^{k} \in [0:B-1]$. Note that
\[
  \Qbf_{b}^{k} \in \Qcal_{b} \triangleq \Bigg\{(q_{0} , q_{1} , \ldots , q_{N-1}) \in {[0:B-1]}^{N} \Big| {\Big(\sum_{n =0 }^{N-1}q_{n}\Big)}_{B} = b\Bigg\},
\]
i.e. the sum of the elements of $\Qbf_{b}^{k}$ is $b \bmod B$ for all $k \in [K]$.
Then, to server $b , \forall b \in [B]$ the following query is sent:
\begin{equation}
  \Qbf_{b} = \Big\{ \Qbf_{b}^{k} | k \in [K] \Big\}.
\end{equation}

After getting these queries, servers will generate answers based on the queries and the files they are storing. As stated earlier, every file is divided into $F$ subfiles, i.e. $W_n = W_n^f , \forall n \in [0:N-1]$, now servers will divide each subfiles into $B-1$ packets i.e.
\[
  W_n^f = \{ W_{n , b}^f : b \in [1:B-1]\} , \forall n \in [0:N-1] , \forall f \in [F].
\]
Each packet is of size $L / (B-1)F$ bits. For every $s \in [S]$, server $b$ computes
\begin{equation}
  A_{b , s}^{f , k} = W_{0 , Q_{b , 0}^{k}}^{f} + W_{1 , Q_{b , 1}^{k}}^{f} + \cdots + W_{N-1 , Q_{b , N-1}^{k}}^{f}~,~\forall (f , k) \in [F] \times [K] \mbox{ s.t. } \Pbf_{f , k} = s,
\end{equation}
where we define $W_{n , 0}^f = 0$ for any $n , f$. Defining
\begin{equation}
	X_{b,s} = \sum_{\{(f,k) \in [F]\times[K] : \Pbf_{f,k} = s\}} A_{b,s}^{f,k}
\end{equation}
server $b$, $b \in [B-1]$ transmits
\begin{equation}
	\Xbf_{b} = \Big( X_{b,1} , \ldots , X_{b,S} \Big).
\end{equation}
Note that if $\Qbf_{0}^{k} = \mathbf{0}, \forall k \in [K]$, then $X_{b,s} = 0$. Therefore if $\Qbf_{0}^{k} = \mathbf{0}, \forall k \in [K]$ then server $0$ won't transmit anything. Whereas if $\Qbf_{0}^{k} \neq \mathbf{0},$ for any $k \in [K]$ then server $0$ transmits
\begin{equation}
	\Xbf_{0} = \Big( X_{0,1} , \ldots , X_{0,S} \Big).
\end{equation}

\subsubsection*{Decoding}\label{sec:subsub:general_decoding}
Consider user $\kpr \in [K]$ and a subfile index $\fpr \in [F]$. If $\Pbf_{\fpr , \kpr} = *$ then user $\kpr$ recover subfile $W_{d_{\kpr}}^{\fpr}$ from the cache. But if $\Pbf_{\fpr , \kpr} = s$ for some $s \in [S]$ then the user decodes the desired subfile from the transmissons as follows:

For some $b \in [0:B-1]$ consider
\begin{align}
  X_{b,s} &= \sum_{\{(f,k) \in [F]\times[K] : \Pbf_{f,k} = s\}} A_{b,s}^{f,k} \\
  &= A_{b,s}^{\fpr , \kpr} + \sum_{\{(f,k) \in [F]\times[K] : \Pbf_{f,k} = s\} \setminus \{(\fpr , \kpr)\}} A_{b,s}^{f,k}.
\end{align}
Now, if $\Pbf_{\fpr , \kpr} = s$ and $\Pbf_{f , k} = s$ for some $(f , k) \in [F] \times [K]$ then C2 guarantees that $\Pbf_{f , \kpr} = *$ and therefore user $\kpr$ can access subfiles $W_{n}^{f}, \forall n \in [0:N-1]$ from its cache. As
\[
  A_{b,s}^{f,k} = \sum_{n = 0}^{N-1}W_{n , Q_{b,n}^{k}}^{f}
\]
user $\kpr$ can construct $A_{b,s}^{f,k}$ from its cache. Therefore user $\kpr$ can recover $A_{b , s}^{\fpr , \kpr}$ form $X_{b , s}$, as
\begin{align*}
  A_{b , s}^{\fpr , \kpr} &= X_{b , s} - \sum_{\{(f,k) \in [F]\times[K] : \Pbf_{f,k} = s\} \setminus \{(\fpr , \kpr)\}} A_{b,s}^{f,k} \\ &= X_{b , s} - \sum_{\{(f,k) \in [F]\times[K] : \Pbf_{f,k} = s\} \setminus \{(\fpr , \kpr)\}} \sum_{n = 0}^{N-1}W_{n , Q_{b,n}^{k}}^{f}.
\end{align*}
Now, user $\kpr$ has $A_{b,s}^{\fpr , \kpr} , \forall b \in [0:B-1]$. Defining
\begin{align*}
  \overline{W}_{k}^{f} &\triangleq \sum_{n=0}^{d_{k} - 1} W_{n , V_{n}^{k}}^{f} + \sum_{n=d_{k}+1}^{N-2} W_{n , V_{n}^{k}}^{f}~~~,~~~\forall (k , f) \in [K] \times [F] , \\
  \overline{V}^{k} &\triangleq \sum_{n = 0}^{N-2} V_{n}^{k}~~~,~~~\forall k \in [K]
\end{align*}
and recalling that $\Qbf_{b}^{k} = \Bigg( V_{0}^{k} , V_{1}^{k} , \cdots , V_{d_k - 1}^{k} , {\Big( b - \overline{V}^{k}\Big)}_{B} , \cdots , V_{N-2}^{k} \Bigg)$, we have
\begin{align*}
	A_{b,s}^{\fpr,\kpr} &= \sum_{n = 0}^{N-1}W_{n , Q_{b,n}^{\kpr}}^{\fpr} \\
                      &= \sum_{n=0}^{d_{\kpr} - 1} W_{n , V_{n}^{\kpr}}^{\fpr} + W_{d_{\kpr} , {(b - \overline{V}^{\kpr})}_{B}}^{\fpr} + \sum_{n=d_{\kpr}+1}^{N-2} W_{n , V_{n}^{\kpr}}^{\fpr} \\
                      &= W_{d_{\kpr} , {(b - \overline{V}^{\kpr})}_{B}}^{\fpr} + \overline{W}_{\kpr}^{\fpr}.
\end{align*}
As $W_{n , 0}^{f} = 0, \forall n,f$ we have
\[
  A_{{(\overline{V}^{\kpr})}_{B} , s}^{\fpr , \kpr} = \overline{W}_{\kpr}^{\fpr}.
\]
For some $\bpr \in [B-1]$, user $\kpr$ recovers packet $W_{d_{\kpr} , \bpr}^{\fpr}$ form $A_{{(\bpr+\overline{V}^{\kpr})}_{B} , s}^{\fpr , \kpr}$ and $A_{{(\overline{V}^{\kpr})}_{B} , s}^{\fpr , \kpr}$ as
\begin{align*}
  &A_{{(\bpr+\overline{V}^{\kpr})}_{B} , s}^{\fpr , \kpr} - A_{{(\overline{V}^{\kpr})}_{B} , s}^{\fpr , \kpr} \\
  = &W_{d_{\kpr} , {({(\bpr+\overline{V}^{\kpr})}_{B} - \overline{V}^{\kpr})}_{B}}^{\fpr} + \overline{W}_{\kpr}^{\fpr} - \overline{W}_{\kpr}^{\fpr} \\
  = &W_{d_{\kpr} , \bpr}^{\fpr}.
\end{align*}
Hence, the user is able to recover all packets of the subfile $W_{d_{\kpr}}^{\fpr}$.

\subsubsection*{Rate}\label{sec:subsub:general_rate}
Except for user $0$, all other users are transmitting $\frac{SL}{F(B-1)}$ bits. Therefore, $R_{b} = \frac{S}{F(B-1)}, \forall b \in [B-1]$. Whereas the number of bits broadcasted by user $0$ depends on the realizations of the queries. Consider
\begin{align*}
	X_{0,s} &= \sum_{\{(f,k) \in [F]\times[K] : \Pbf_{f,k} = s\}} A_{0,s}^{f,k} \\
  &= \sum_{\{(f,k) \in [F]\times[K] : \Pbf_{f,k} = s\}} \sum_{n = 0}^{N-1}W_{n , Q_{0,n}^{k}}^{f}. \\
\end{align*}
If $Q_{0,n}^{k} = 0 , \forall n \in [0:N-1]$ and $\forall k \in \Kappa_{s}$ which can happen with probability $B^{-|\Kappa_{s}|(N-1)}$ then $X_{0,s} = 0$ and server $0$ won't broadcast $X_{0,s}$. But if $Q_{0,n}^{k} \neq 0$ for some $n \in [0:N-1]$ and for some $k \in \Kappa_{s}$, then server $0$ broadcasts $X_{0,s}$ of size $\frac{L}{F(B-1)}$ bits to the users. Let, $R_{0,s}L$ be the size of the transmission $X_{0,s}$. Then
\begin{align*}
	\Ebb\Big\{R_{0,s}L\Big\} &= \Bigg(1 - \frac{1}{B^{|\Kappa_{s}|(N-1)}}\Bigg)\frac{L}{F(B-1)} \\
                           &= \frac{B^{|\Kappa_{s}|(N-1)} - 1}{B^{|\Kappa_{s}|(N-1)}} \frac{L}{F(B-1)} \\
                           &= \frac{L}{F} \frac{1 + B + \ldots + B^{|\Kappa_{s}|(N-1) - 1}}{B^{|\Kappa_{s}|(N-1)}} \\
                           &= \frac{L}{F} \RPIRs{B}{|\Kappa_{s}| (N-1)} \\
  \implies \Ebb\Big\{R_{0,s}\Big\} &= \frac{1}{F} \RPIRs{B}{|\Kappa_{s}| (N-1)}. \\
\end{align*}
Therefore, the average size of the broadcast performed by server $0$ is
\begin{equation}
	\Ebb\Big\{ R_{0} L \Big\} = \frac{L}{F} \sum_{s \in [S]} \RPIRs{B}{|\Kappa_{s}| (N-1)}.
\end{equation}
And the average number of bits broadcasted by every server is
\begin{align*}
	RL &= \frac{L}{F} \sum_{s \in [S]} \RPIRs{B}{|\Kappa_{s}| (N-1)} + \frac{(B-1)SL}{F(B-1)} \\
  &= \frac{L}{F} \Bigg( S + \sum_{s \in [S]} \RPIRs{B}{|\Kappa_{s}| (N-1)} \Bigg).
\end{align*}
Therefore the rate is given by
\begin{equation}
	R = \frac{S}{F} \Bigg( 1 + \frac{1}{S}\sum_{s \in [S]} \RPIRs{B}{|\Kappa_{s}| (N-1)} \Bigg).
\end{equation}

\subsubsection*{Proof of Privacy}\label{sec:subsub:general_proof_privacy}
Consider $\Qbf_{b}$, the query sent to server $b$. For any $\qbf \in {\Qcal_{b}}^{K}$ and for any $\abf = (a_{1} \ldots a_{K}) \in {[0:N-1]}^{K}$ we show that
\[
  \Prob{\dbf = \abf | \Qbf_{b} = \qbf} = \Prob{\dbf = \abf}.
\]
Let $\qbf = (\qbf^{1} , \ldots , \qbf^{K})$ where $\qbf^{k} \in \Qcal_{b}$. Consider
\begin{align*}
  \Prob{\dbf = \abf | \Qbf_{b} = \qbf} &= \frac{\Prob{\Qbf_{b} = \qbf | \dbf = \abf} \Prob{\dbf = \abf}}{\Prob{\Qbf_{b} = \qbf}}, \mbox{ and } \\
  \Prob{\Qbf_{b} = \qbf | \dbf = \abf} &= \Prob{ \Qbf_{b}^{k} = \qbf^{k} , \forall k \in [K] | \dbf = \abf } \\
  &= \begin{aligned}[t]
       \Prob{&\Vbf^{k}(0:a_{k} - 1)=\qbf^{k}(0:a_{k}-1) ,\\
       &\Vbf^{k}(a_{k}:N-2)=\qbf^{k}(a_{k}+1:N-1) , \forall k \in [K]}.
     \end{aligned}
\end{align*}
As all $\Vbf_{k}, \forall k \in [K]$ are chosen independently we have
\begin{align*}
  \Prob{\Qbf_{b} = \qbf | \dbf = \abf} &= \frac{1}{{B}^{K(N-1)}} \\
  \implies \Prob{\dbf = \abf | \Qbf_{b} = \qbf} &= \frac{1}{N^{K}} = \Prob{\dbf = \abf}
\end{align*}
which proves that a query sent to any server is independent of the demand vector, and therefore none of the servers can get any information about the demand vector.

\subsubsection*{Subpacketization level and Upload cost}\label{sec:subsub:general_subpacket_upload}
Every file is divided into $F$ subfiles, and every subfile is further divided into $B - 1$ packets. So, the subpacketization level is $F(B-1)$. Every user is constructing one query for every server. Therefore there are $BK$ queries in total. Each query consists of $N$ integers out of which $N-1$ integers are chosen uniformly from $[0:B-1]$, and the ${N}^{th}$ integer is dependent on the other $N-1$. Therefore, the upload cost is $BK(N-1)\log_{2}B$ bits.

\section{Conclusion}
In this paper, we considered the problem of cache-aided multi-user private information retrieval. We considered dedicated cache setups where all users have a cache of equal size. The MuPIR strategy we proposed utilizes placement delivery arrays to specify placements and private deliveries by the servers. The subpacketization level of MuPIR schemes depends on the subpacketization level of the PDAs. Therefore, MuPIR schemes utilizing low subpacketization PDAs also have low subpacketization level as well as lower upload costs compared to already existing schemes e.g.~\cite{Ming21CaMuPIR}. We also proved the order optimality of the MuPIR schemes that are based on the PDAs corresponding to the MAN scheme. Then we analyzed MuPIR schemes based on PDAs described in~\cite{QYan17PDA}. The resulting MuPIR scheme has a marginally higher rate than the rate achieved in product design of~\cite{Ming21CaMuPIR}, but there is a significant improvement in subpacketization level and upload cost.

Although order optimal results are provided for the special case of MAN-based PDA, exact optimality results are open in terms of rate, subpacketization level and upload cost for cache-aided MuPIR problem. Furthermore, this paper didn't consider the case of multi-access caches, which generalizes dedicated cache setups. In a multi-access cache setup, the subpacketization level is lower than the dedicated cache setup for the same number of users. Still, more work is needed to further reduce the subpacketization and the upload cost for multi-access setups.

\section*{Acknowledgement}
This work was supported partly by the Science and Engineering Research Board (SERB) of Department of Science and Technology (DST), Government of India, through J.C. Bose National Fellowship to B. Sundar Rajan, and by the Ministry of Human Resource Development (MHRD), Government of India, through Prime Minister’s Research Fellowship (PMRF) to Kanishak Vaidya.

\clearpage
\end{document}